\documentclass[12 pt]{amsart}
\usepackage{amssymb,amsfonts,amsthm, amsmath, color}
\usepackage{graphicx}
\usepackage{float}
\usepackage{bm}
\usepackage[normalem]{ulem}
\usepackage{xcolor}

\topmargin=0pt \oddsidemargin=0pt \evensidemargin=0pt
\textwidth=15cm \textheight=22cm \raggedbottom

\newtheorem{thm}{Theorem}[section]

\newtheorem{lem}[thm]{Lemma}
\newtheorem{prop}[thm]{Proposition}

\theoremstyle{definition}

\theoremstyle{remark}
\newtheorem{rem}[thm]{Remark}
\numberwithin{equation}{section}

    \setcounter{topnumber}{2}
    \setcounter{bottomnumber}{2}
    \setcounter{totalnumber}{4}     
    \setcounter{dbltopnumber}{2}    

\begin{document}
\newcommand\redsout{\bgroup\markoverwith{\textcolor{red}{\rule[0.5ex]{2pt}{3pt}}}\ULon}
\title[Information theoretic measures of structured neural
networks]{Entropy, mutual information, and systematic measures of
  structured spiking neural networks}
\author{Wenjie Li}
\address{Wenjie Li: Department of Mathematics and Statistics, Washington University, St. Louis, MO, 63130, USA}
\email{li.wenjie@wustl.edu}
\author{Yao Li}
\address{Yao Li: Department of Mathematics and Statistics, University
  of Massachusetts Amherst, Amherst, MA, 01002, USA}
\email{yaoli@math.umass.edu}
\thanks{Yao Li is partially supported by NSF DMS-1813246.}

\keywords{neural field models, entropy, mutual information,
  degeneracy, complexity}

\begin{abstract}
 The aim of this paper is to investigate various information-theoretic
 measures, including entropy, mutual information, and some systematic
 measures that based on mutual information, for a class of structured spiking neuronal network. In order
 to analyze and compute these information-theoretic measures for large
 networks, we coarse-grained the data by ignoring the order of spikes that fall into the same small
 time bin. The resultant coarse-grained entropy mainly capture the
 information contained in the rhythm produced by a local population of
 the network. We first proved that these information theoretical
 measures are well-defined and computable by proving the stochastic
 stability and the law of large numbers. Then we use three neuronal network examples, from simple
 to complex, to investigate these information-theoretic
 measures. Several analytical and computational results about
 properties of these information-theoretic
 measures are given.

\end{abstract}
\maketitle

\section{Introduction}
There has been a long history for researchers to use information
theoretic measures, such as entropy and mutual information, to study
activities of neurons \cite{strong1998entropy, nemenman2004entropy, borst1999information}. It is important to understand how
neuronal networks, including our brains, encode and decode
information. It is well known that neurons transmit information by
time series of spike trains. A common approach to estimate neuronal
entropy is to divide the time series of spike trains into a collection
of binary ``words''. More precisely, the time axis is divided into
many time windows with $m$ ``sub-windows''. A sub-window takes value
$1$ if a spike is recorded in it and $0$ otherwise. This gives a
binary ``word'' with $m$ ``letteres''. Entropy is then estimated
through frequencies of those words. 

However, estimating entropy becomes difficult for larger neuronal
network. If one considers the time serie generated by each neuron
separately, then one needs to consider all possible values of a large
vector of ``words'', which grows exponentially with the network size. If
one considers the time series of all spikes produced by the neuronal network, the
time window has to be extremely small to avoid two spikes fall into
the same bin. Either approach makes the sample size much smaller than the
number of possible configurations. This makes estimating entropy very
difficult, in spite of many results on estimations in the undersample
regime \cite{nemenman2004entropy, strong1998application,
  strong1998entropy}.

The first aim of this paper is to study information theoretic measures of a
structured neural network model introduced in \cite{li2019well} and
\cite{li2019stochastic}. Neurons in this network have integrate-and-fire
dynamics. Both the configuration of neurons and the rule of
interactions among neurons are simplified to make the model
mathematically and computationally tractable. It was shown in
\cite{li2019well} and
\cite{li2019stochastic} that this model is still able to produce
a rich dynamics of spiking patterns. In particular, this model can
produce multiple firing events (MFEs), which is a partially
synchronized spiking activities that have been observed in other more
realistic models \cite{rangan2013dynamics, rangan2013emergent,
  chariker2015emergent, chariker2016orientation}. The only difference is that
postsynaptic neurons in this paper are given by a {\it fixed}
connection graph rather than decided on-the-fly. 

To study information theoretical measures for large networks, we need
to consider the coarse-grained entropy instead. The idea is still to
divide time series of spikes into many time windows and construct ``words''. But different
from traditional approaches, here we do not distinguish the order of
spikes that fall into the same time window. This method has some
similarity to the multiscale entropy analysis used in many applications
\cite{costa2005multiscale}. In a large neuronal
network, the spike count in a time window can be large. Hence we use a
partition function to further reduce the total number of possible
``words''. In addition, we prove a law of large numbers of spike
counts, which says that the coarse-grained entropy defined in this
paper is both well-defined and computable.

The biological motivation of our definition is that MFEs produced by a
neuronal network can have fairly diverse spiking patterns (See Figure
\ref{fig5} and Figure \ref{fig6}).  In addition, it has
been argued that some information in the brain is indeed communicated through
resonance or synchronization of the Gamma oscillation
\cite{hahn2018portraits, hahn2014communication}, which is believed
to be modeled by MFEs in neuronal network models \cite{rangan2013emergent,
  chariker2015emergent}. This prompts us
to study information contained in those spiking volleys. By collecting spike
counts in time windows, we are able to obtain the uncertainty of a spiking
pattern. Heuristically, if a spiking pattern is completely
homogeneous, it contains little information from a coarse-grained
sense, as the spike count in a time window has little variation. Same
thing happens if the spiking activity is completely synchronized, at
which we have ``all-or-none'' spike counts in a time window. In
contrast, the spiking pattern contains more information if it consists
of MFEs, which are only partially synchronized, and the degree of synchronization has
high variation. This is confirmed by our numerical study. 

The definition of coarse-grained entropy can be extended to multiple
local populations. This gives the concept of mutual information, which
measures the amount of information shared by two local populations of
a neuronal network. To numerically study the mutual information, we
introduce three cortex models, from simple to complicated. The first
model only has two interconnected hypercolumns, with no geometry
structure. The second model aims to describe two layers of a piece of the 
cortex, each of which consists of many hypercolumns. We are interested
in the effect of feedforward and feedback connections expressed in information
theoretical measures. The third model is about layer 4 and layer 6 of
the primary visual cortex. In addition to spatial structures of
hypercolumns, there are orientation columns in both layers and long
range connections in layer 6. Control parameters are the magnitude of
feedforward connections, feedback connections, and long range
connections. Mutual information is also studied in this model. We find
that feedforward and feedback connections can enhance mutual
information between the two layers in all cases.

The second aim of this paper is to quantify some systematic measures,
such as degeneracy and complexity, to spiking neuronal networks. These
systematic measures are proposed in the study of systems biology
\cite{edelman2001degeneracy, tononi1999measures,
  whitacre2010degeneracy} and quantified for ODE-modeled networks in our earlier
work \cite{li2012quantification, li2016systematic}. Both degeneracy and complexity can be measured by a linear
combination of mutual information between components of a network. The
introduction of coarse-grained entropy makes these
systematic measures both well-defined and computable. 

Biologically speaking, the degeneracy measures the ability of structurally
different components
of a neuronal network to perform similar functions on a certain
target. And the (structural) complexity measures how different components in a neuronal
network functionally depend on each other. In this paper, degeneracy
and complexity are defined using coarse-grained entropy. We also
proved that a neuronal network with high degeneracy must be
(structurally) complex. Finally, the dependency of degeneracy and
complexity on certain network parameters is studied for our cortex
models.  

The organization of this paper is as follows: Section 2 defines our
structured spiking neuronal network, and three cortex models that are used
in later numerical studies. Section 3 defines the coarse-grained
entropy and proves the law of large numbers of spike counts, which
implies that the coarse-grained entropy is well-defined and
computable. Section 4 and 5 study mutual information and systematic
measures, respectively. Section 6 is the conclusion.

\section{Structured spiking neural network model}
It is well-known that the cerebral cortex has many
substructures. In particular, a functional organization called
cortical column or hypercolumn is believed to be the ``functional unit of
information processing''. Neurons in the same hypercolumn usually have
similar receptive fields. In the visual cortex, a hypercolumn can be
further divided into many orientation columns. Each orientation column
only responds to stimulations with a certain orientation. This
motivates us to propose a structured spiking neural network model that
has two scales at the level of individual neurons and hypercolumns,
respectively. 

\subsection{Network description}
We consider a large population of neurons that is divided into many
local structures (hypercolumns or orientational hypercolumns), called
local populations. The
following assumptions are made in order to describe the neuronal
activity of this population by a mathematically
tractable Markov process. Note that the first three assumptions are
identical to those in our earlier papers \cite{li2019well, li2019stochastic}

\begin{itemize}
  \item The membrane potential of a neuron takes finitely  many discrete
    values.
\item External synaptic input to each neuron is modeled by independent
  Poisson process. The rates of Poisson processes are identical in the
  same local population.
\item A neuron spikes when its membrane potential reaches a certain
  threshold. A post-spike neuron stays in a refractory state for an
  exponentially distributed random time.
\item The set of postsynaptic neurons is given by a graph $G$. 
\end{itemize}

The detailed description of this model is divided into the following
aspects. 

{\bf Neuron indices.} Consider a neuronal network model with $K$
local populations, denoted by
$L_{1}, \cdots, L_{K}$. Each local population has $N_{E}$ excitatory
neurons and $N_{I}$ inhibitory neurons. A type $Q$ neuron in
local population $L_{k}$ is denoted as neuron $(k, n, Q)$, where $k \in \{1,
\cdots, K\}$, $Q \in \{E, I\}$, and $n \in \{1, \cdots, N_{Q}\}$. The
triplet $(k, n, Q)$ is called the {\it label} of a neuron. We further
assign an integer-valued {\it index}, denoted by $\mathrm{id}_{(k, n,
  Q)}$, to neuron $(k, n, Q)$, such that
$$
  \mathrm{id}_{(k,n,Q)} = (k-1)(N_{E} + N_{I}) + n + N_{E} \mathbf{1}_{\{Q = E\}} \,.
$$
In other words, $\mathrm{id}$ is a function from the set of labels,
denoted by $\mathcal{L} = \{ (k, n, q) \,|\, 1 \leq k \leq K, 1 \leq n
\leq N_{E} + N_{I}, Q \in \{E, I\} \}$, to the set of indices, denoted by
$\mathcal{ID} = \{1, \cdots, K(N_{E} + N_{I})\}$. We call a neuron
with index $i$ ``neuron $i$'' when it does not lead to confusions. For each index $i \in \mathcal{ID}$, we
denote $\mathrm{Label}_{i}$ as the label of the corresponding neuron. Obviously
$\mathrm{Label}_{i}$ is the inverse function of $\mathrm{id}$ from the set of
indices to the set of labels. A connection graph $G$ is said to be
{\it static} if the edge set $E$ is fixed, and {\it random} if $E$ is
updated after each spike.

{\bf Connection graph.} The connection graph of the neural network is
a directed graph $G = (V, \mathcal{E})$ with $V = \mathcal{ID}$. The edge set
$\mathcal{E}$ is the collection of synapses such that
$$
  \mathcal{E} = \left \{  \{\mathrm{pre}, \mathrm{post}\} \,|\,  \mathrm{pre}
    \in V, \mathrm{post} \in V \right \} \,,
$$
where $\mathrm{pre}$ and $\mathrm{post}$ are indices of presynaptic
and postsynaptic neurons, respectively. When neuron $\mathrm{pre}$
fires a spike, its postsynaptic neuron $\mathrm{post}$ receives the
spike and changes its membrane potential. The collection of
postsynaptic neurons (resp. presynaptic neurons) of a neuron with index $i$ is denoted by
$\mathrm{Po}(i)$ (resp. $\mathrm{Pr}(i)$).

{\bf Neuron and external drive.}
We denote the membrane potential of a neuron with label $i \in \mathcal{ID}$ by
$$
V_{i} \in \Gamma := \{-M_{r}, -M_{r}+1, \dotsc , -1, 0, 1, 2, \dotsc, M\} \cup \{\mathcal{R}\},
$$
where $M, M_{r} \in \mathbb{N}_{+}$ denote the threshold for spiking
and the reversal potential, respectively. $\mathcal{R}$
represents the refractory state. When $V_{i}$ reaches $M$, a neuron fires a spike, and instantaneously moves to the refractory state
$\mathcal{R}$. At the refractory state, a neuron stays inactive for an
exponentially distributed amount of time with mean
$\tau_{\mathcal{R}}> 0$. After that, $V_{i}$ is reset to $0$.

The sources of stimulus that a neuron receives can be divided into
external drives and postsynaptic kicks from in-network neurons. The
external drive comes from outside of the neuronal network in this
model, either from sensory input or from other parts of the brain. We
assume that neurons in a local population receive external drive with
the same rate, and we model external drives to excitatory and inhibitory
neurons in local population $L_{l}$ by Poisson kicks with rates
$\lambda^{l}_{E}, \lambda^{l}_{I} > 0$ respectively, for $l \in \{4,6\}$. More precisely,
assume the label of neuron $i \in \mathcal{ID}$ is $\mathrm{Label}_{i}
= (k, n, Q)$. Then the time that
neuron $i$ receives external kicks is given by a Poisson process with rate
$\lambda^{k}_{Q}$. If $V_{i} \neq \mathcal{R}$ when neuron $i$
receives an external kick, $V_{i}$ immediately
increases (respectively decreased) by $1$.

{\bf Neuron spikes and postsynaptic kicks.}

When neuron $i$ fires a spike immediately after $V_{i}$ reaches the
threshold, all postsynaptic neurons in $\mathrm{Po}(i)$ receive a
postsynaptic kick. The effect of a postsynaptic kick is delayed by an
i.i.d. exponentially distributed random time with 
mean $\tau_{E}, \tau_{I} > 0$, for E, I kicks,
respectively. When the delay is over, the kick takes effect
instantaneously to neuron
$j \in \mathrm{Po}(i)$ if $V_{j} \neq \mathcal{R}$. After the
postsynaptic kick, $V_{j}$ jumps by $S_{QQ'}, Q, Q'
\in \{E, I\}$, where $Q$ and $Q'$
represent the neuron types of $j$ and $i$ respectively. $S_{QQ'}$ is
the strength of a postsynaptic kick, which is positive if $Q' = E$,
and negative if $Q' = I$. If $Q' = E$ and $V_{j}$ jumps to $\geq M$,
neuron $j$ jumps to $\mathcal{R}$ instead and fires a spike. If $Q' =
I$ and $V_{j}$ jumps to $< - M_{r}$, it takes value $V_{j} = -M_{r}$.

It remains to discuss non-integer $S_{Q,Q'}$. We let $S_{Q,Q'} = p +
u$ where $p = \lfloor{S_{Q,Q'}}\rfloor$ is the largest integer smaller
than $S_{Q,Q'}$ and $u$ is a Binomial random variable taking value in
$\{0,1\}$. 

{\bf Markov process.}

Because of the delay of postsynaptic kicks, the state of neuron $i$ is
denoted by a triplet $(V_{i}, H^{E}_{i}, H^{I}_{i})$. We use $H_i^E$
($H_i^I$ respectively) to store the number of E (I respectively) kicks
received from $\mathrm{Pr}(i)$ that have not taken effects. It is easy
to see that the model described above generates a Markov
process, denoted by $X_{t}$, on the state space 
$$
  \mathbf{\Omega} := (\Gamma \times \mathbb{N}_{+} \times \mathbb{N}_{+})^{K(N_{E} +
    N_{I})} \,.
$$
A state $\mathbf{x} \in \mathbf{\Omega}$ has the form
$$
  \mathbf{x} = \{(V_{i}, H^{E}_{i}, H^{I}_{i})\}_{i \in \mathcal{ID}} \,.
$$
The transition probability of $X_{t}$ is
$$
  P^{t}( \mathbf{x}, \mathbf{y})  = \mathbb{P}[ X_{t} = \mathbf{y}
  \,|\, X_{0} = \mathbf{x}] \,.
$$
A probability measure $\pi$ on $\mathbf{\Omega}$  is said to be
invariant if $\pi = \pi P^{t}$, where $\pi P^{t}$ is given by the left
operator of $P^{t}$
$$
  \pi P^{t}( \mathbf{x}) = \sum_{ \mathbf{y} \in \mathbf{\Omega}} \pi(
  \mathbf{y}) P^{t}( \mathbf{y}, \mathbf{x} ) \,.
$$
Throughout this paper, we denote the conditional probability
$\mathbb{P}[\cdot \,|\, \mathrm{law}(X_{0}) = \mu]$ by
$P_{\mu}[\cdot]$ for the sake of simplicity, where $\mu$ is a
probability measure on $\mathbf{\Omega}$, and $\mathrm{law}(X_{0}) =
\mu$ means the initial distribution of $X_{t}$ is $\mu$. 

\subsection{Visual cortex models I-III}
We use the following three visual cortex models, from simple to
complicated, as the numerical examples of this paper. The first model
({\bf Model I}) only has two local populations, one feedforward layer
and one feedback layer. No
spatial factor is considered in this model. The
second model ({\bf Model II}) has one feedforward layer and one
feedback layer. Each layer consists of $16$ local populations
(hypercolumns). Each neuron in {\bf Model II} has a coordinate. And
the connection graph $G$ is generated according to locations of
neurons. {\bf Model III}, the most complex model, has the same layers
, hypercolumns, and neuron coordinates as in {\bf Model II}. Each hypercolumn of {\bf Model
  III} has $4$ orientational columns. The connection graph $G$ depends
on both location and orientation. We use easier model to illustrate
entropy and mutual information, and discuss the role of mutual
information by showing numerical results for more complicated
models. {\bf Model III} is used to demonstrate degeneracy and
complexity, which are two systematic measures defined on complex
biological networks. In all three models, we choose $M = 100$, $M_{r}
= 66$, $N_{E} = 300$, $N_{I} = 100$, $\tau_{\mathcal{R}} = 2.5$ ms,
$S_{EE} = 5.0$, $S_{IE} = 2.3$, $S_{EI} = -3.5$, and $S_{II} = -3.0$
unless further specified. Other
parameters including the number of local populations $K$, the
connection graph $G$, delay times $\tau_{E}$ and $\tau_{I}$, and
external drive rates $\lambda^{1}_{E}, \lambda^{1}_{I}, \cdots,
\lambda^{K}_{E}, \lambda^{K}_{I}$, will be prescribed when introducing
each model.

{\bf Model I.} In the first model we have $K = 2$. Two local
populations represent the feedforward and feedback layer
respectively. We set $\lambda^{i}_{Q} = 5000$ for $i = 1,2$ and $Q =
E, I$. The connection graph $G$ is a spatially homogeneous
random graph. For each pair of $i, j \in \mathcal{ID}$, let
$\mathrm{Label}_{i} = (k_{i}, n_{i}, Q_{i})$ and $\mathrm{Label}_{j} =
(k_{j}, n_{j}, Q_{j})$. The connection probability is divided into
three cases: (i) $\{i, j\} \in \mathcal{E}$ with probability $P_{Q_{i}Q_{j}}$ if
$k_{i} = k_{j}$ (intra-layer connection), (ii) $\{i, j\} \in \mathcal{E}$ with probability $P^{f}_{Q_{i}Q_{j}}$ if
$k_{i} = 1,  k_{j} = 2$ (feedforward connection), and (iii) $\{i, j\}
\in \mathcal{E}$ with probability $P^{b}_{Q_{i}Q_{j}}$ if $k_{i} = 2,  k_{j} =
1$ (feedback connection). In this model, we choose parameters $P_{EE}
= 0.15$,  $P_{IE} = 0.5$, $P_{EI} = 0.5$, and $P_{II} = 0.4$, $P^{f}_{EI}
= P^{f}_{II} = P^{b}_{EI} = P^{b}_{II} = 0$. In other words inhibitory
neurons only connect to neurons in the same layer. We set $P^{f}_{EE}
= \rho_{f} P_{EE}$, $P^{f}_{IE} = \rho_{f} P_{IE}$, $P^{b}_{EE}
= \rho_{b} P_{EE}$, and $P^{b}_{IE} = \rho_{b} P_{IE}$. $\rho_{f}$ and
$\rho_{b}$ represent the strength of feedforward and feedback
connection respectively. Without further specification, synapse delay
times are chosen to be $\tau_{E} = 2.0$ ms and $\tau_{I} = 4.5$ ms. $\rho_{b}$ and $\rho_{f}$ are two main control parameters in {\bf
Model I}.

{\bf Model II.} In the second model we have $K = 32$, with the first $16$ local
populations as hypercolums at the feedforward layer and the rest are
hypercolumns in the feedback layer. We set $\lambda^{i}_{Q} = 5000$
for $i = 1,2, \cdots, 15, 16$ and $\lambda^{i}_{Q} = 4500$ for $i =
17,18, \cdots, 31, 32$, $Q = E, I$. Each neuron has a location
coordinate. We assume neurons in a local population form a $10\times
10$ lattice. At each lattice point, there are three E neurons and one
I neuron. We assume one local population represents neurons in one
hypercolumn of the visual cortex with a size $0.25$ mm. In other words
the grid size of this lattice is $25$ $\mu$m, and the boundary neurons
are $12.5$ $\mu$m away from their nearest local boundary. Local
populations in each layer is a $4\times 4$ array. The connection graph
is based on the types and locations of each pair of neurons. See
Figure \ref{fig1} for the layout of this model. 

For each pair of $i, j \in \mathcal{ID}$, let
$\mathrm{Label}_{i} = (k_{i}, n_{i}, Q_{i})$ and $\mathrm{Label}_{j} =
(k_{j}, n_{j}, Q_{j})$. The connection radius belongs to one of the
three cases: (i) $\{i, j\} \in \mathcal{E}$ with radius $L_{Q_{i}Q_{j}}$ if
$k_{i} $ and $k_{j}$ are either both $\leq 16$ or both $> 16$ (intra-layer connection), (ii) $\{i, j\} \in \mathcal{E}$ with radius $L^{f}_{Q_{i}Q_{j}}$ if
$k_{i} \leq 16,  k_{j} > 16$ (feedforward connection), and (iii) $\{i, j\}
\in \mathcal{E}$ with radius $L^{b}_{Q_{i}Q_{j}}$ if $k_{i} > 16,
k_{j} \leq 16$ (feedback connection). We choose $L_{EE}
= L_{IE} = 0.15$, $L_{EI} = L_{II} = 0.10$, $L^{f}_{EI}
= L^{f}_{II} = L^{b}_{EI} = L^{b}_{II} = 0.10$. 

The actual connection probability is the product of a baseline connection probability
$p_{Q_{i}Q_{j}}$ (resp. $p^{f}_{Q_{i}Q_{j}}$, $p^{b}_{Q_{i}Q_{j}}$) as defined in {\bf Model I} and the probability density
function of a normal distribution with a standard deviation
$L_{Q_{i}Q_{j}}$ (resp. $L^{f}_{Q_{i}Q_{j}}$, $L^{b}_{Q_{i}Q_{j}}$) . More precisely, the probability that neurons $i$ and
$j$ at the same layer are connected is  
$$
p_{Q_{i}Q_{j}}\frac{\exp(-\frac{d^2}{2L_{Q_{i}Q_{j}}^2})}{(2\pi L_{Q_{i}Q_{j}}^2)}
$$
where $d$ is the distance between the two neurons in question. Cases of feedforward and feedback connection probability are
analogous. Baseline connection probabilities are $p_{EE} = 0.01$,
$p_{IE} = 0.035$, $p_{EI} = 0.03$, $p_{II} = 0.03$. Again, I neurons
only have intra-layer connections so $p^{f}_{EI}
= p^{f}_{II} = p^{b}_{EI} = p^{b}_{II} = 0$. The rule of
feedforward and feedback connection probability is analogous to those of Model
I. We have $p^{f}_{QE} = \rho^{f}p_{QE}$ and $p^{b}_{QE} =
\rho^{b}p_{QE}$ for $Q = I, E$. $\rho^{f}$ and $\rho^{b}$ are two
control parameters.

\begin{figure}[!h]
\centering
\includegraphics[width=0.49\linewidth]{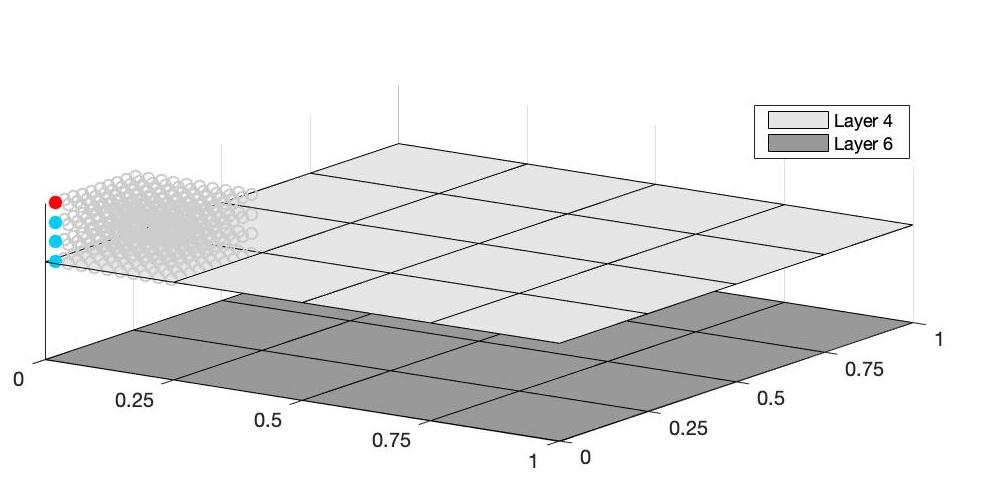}
\includegraphics[width=0.49\linewidth]{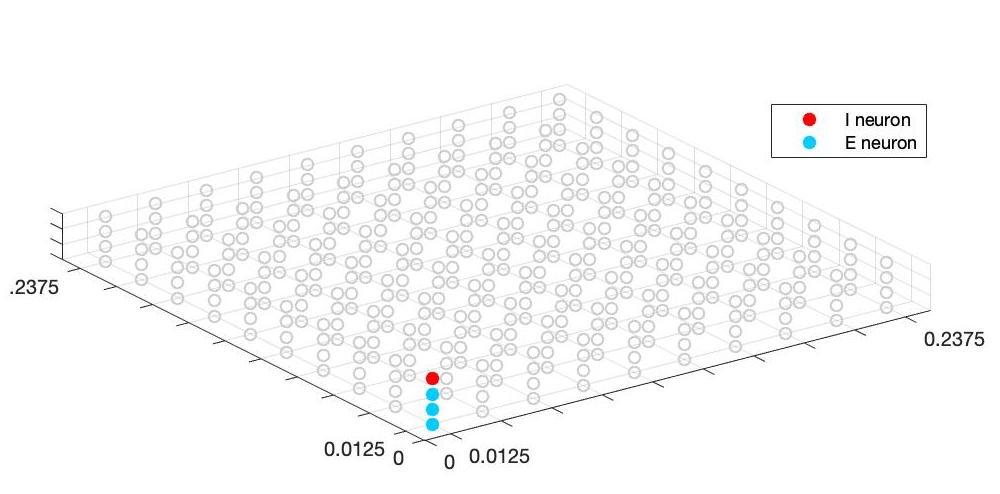}
\caption{Left: Layout of Model II. Each layer consists of $4\times 4$
  hypercolumns. Right: Geometry of each hypercolumn. Each lattice
  point is occupied by 3 excitatory neurons and 1 inhibitory neuron. }
\label{fig1}
\end{figure}

{\bf Model III.} Then we add orientation columns into the model in
order to model two layers in the primary visual cortex. More
precisely, each hypercolumn in Model II is further divided into four
orientational columns that resembles the pinwheel structure
\cite{hubel1995eye, kang2003mexican, kaschube2010universality}. The layout of orientational columns is
demonstrated in Figure \ref{fig2}. We assume the visual stimulation is
vertical. The external drive rates to orientational columns with
preferences $0$ deg, $45$ deg, $90$ deg, and $135$ deg are multiplied by
coefficient $1.0$, $0.6$, $0.2$, and $0.6$ respectively. 

We also add long-range excitatory connections on the feedback
layer that hits neurons with same orientation preference, with is
consistent with experimental studies \cite{stettler2002lateral,
  gilbert1989columnar, malach1993relationship} that many 
connections are between neurons with the same orientation preferences. The
connection probability is given by a linear function. For a pair of
neurons $i, j \in \mathcal{ID}$, the probability of having long range
connection is 
$$
p_{Q_{i}Q_{j}}C_{long}(1 - 0.5 d),
$$
if $Q_{i} = E$, $d > 2 L_{EE}$, and neuron $i, j$ have the same
orientation preference, where the baseline connection probability
$p_{Q_{i}Q_{j}}$ is the same as before, $d$ is the distance between the two
neurons in question, and $C_{long}$ is a changing parameter that
controls the strength of long range connections. Here we assume that
the probability of having long range connections decreases linearly
and becomes zero if $d$ is larger than $2$ mm.

Other configurations of Model III are identical to those of Model II. 

\vskip 1cm
\begin{figure}[!h]
    \centering
    \includegraphics[width=0.8\linewidth]{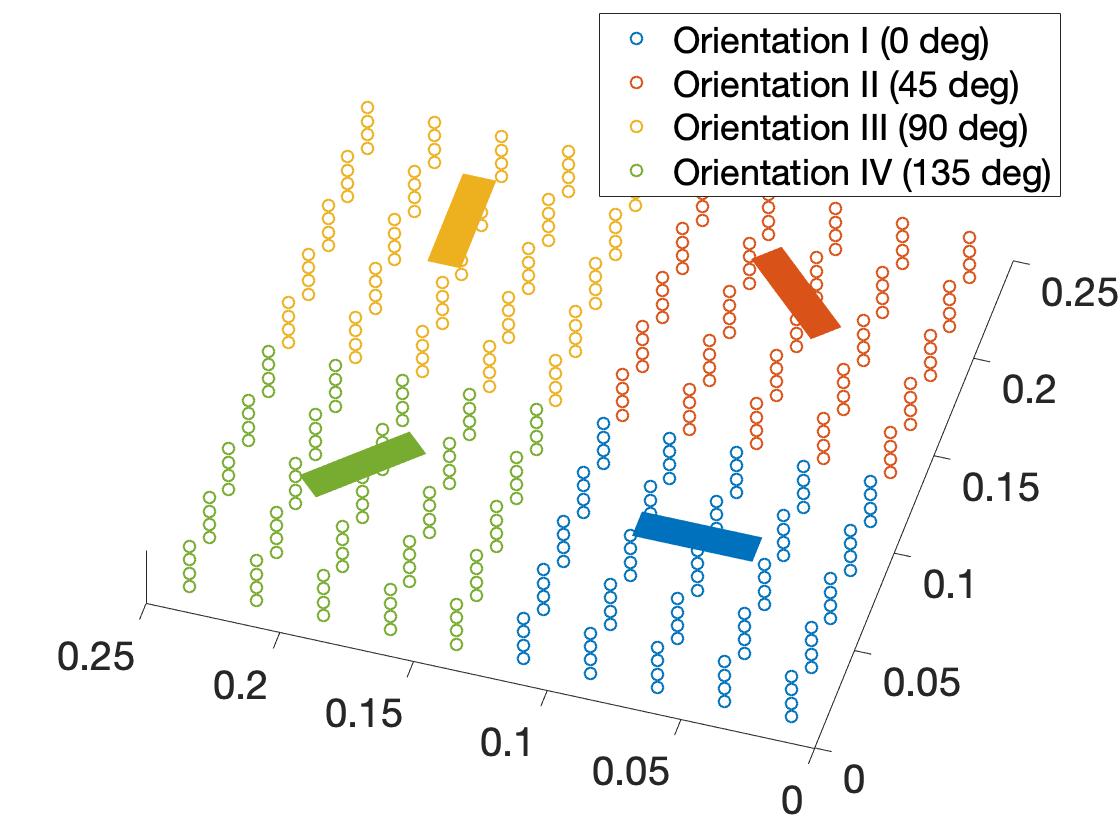}
    \caption{The layout of neurons in one hypercolumn.}
    \label{fig2}
\end{figure}

\vskip 1cm

\subsection{Firing rate and spiking pattern in models}

In this subsection we will show some simulation results about our
visual cortex models. The first result is about the mean firing rate. Figure \ref{fig3} gives the firing rate of {\bf
  Model I} versus the background drive rate $\lambda^{i}_{E} =
\lambda^{i}_{I} = \lambda$ for $i = 1, 2$ when $\lambda$ increases
from $1000$ to $8000$. We can see an increase of empirical firing rate
with the background drive. This is consistent with our previous
results in \cite{li2019well, li2019stochastic}. The heat map of firing rates in the most
complicated model (Model III) is demonstrated in Figure \ref{fig4}. We
can clearly distinguish orientation columns in the heat map. The
vertical-preferred orientation columns have highest firing rate, while
the horizontal-preferred orientation columns fire the slowest. See caption of
Figure \ref{fig4} for the choice of control parameters.

\begin{figure}[!h]
    \centering
    \includegraphics[scale=0.25]{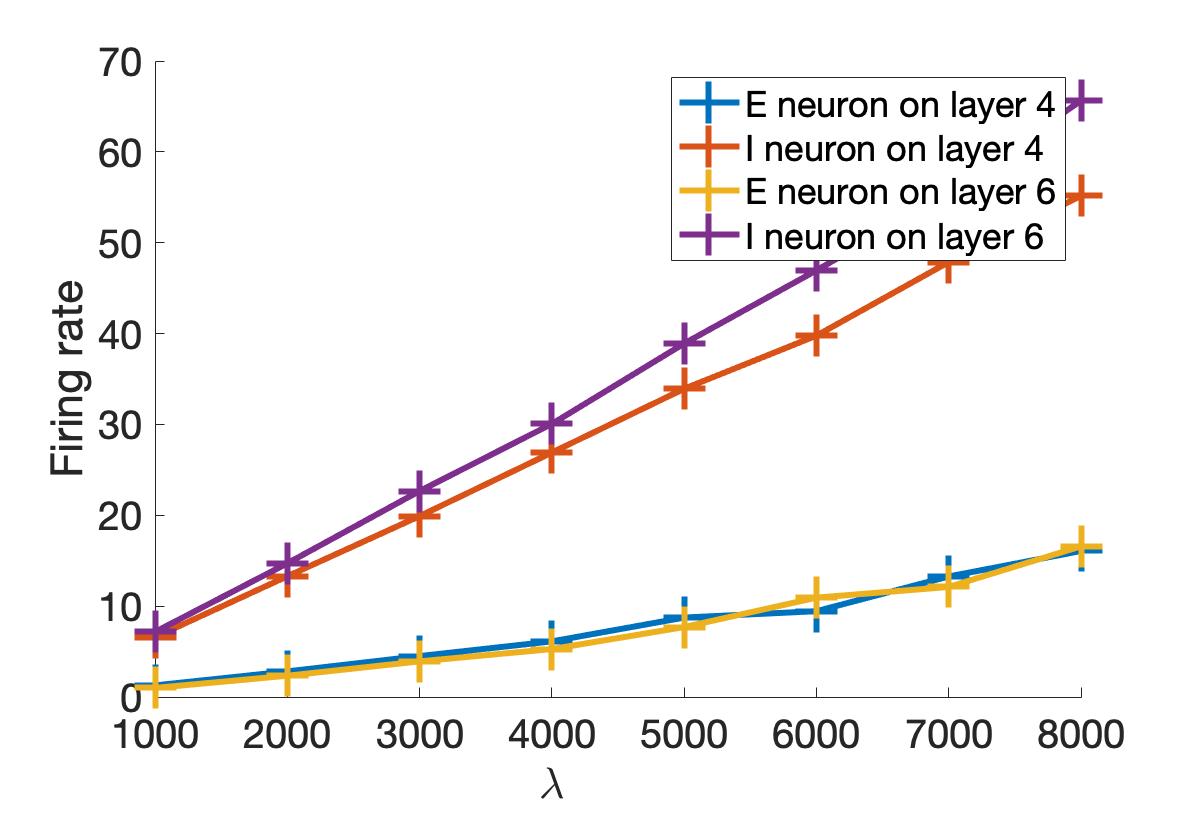}
    \caption{E-population and I-population mean firing rates in layer 4 and 6 when $\lambda$
changes from $1000$ to $8000$. Parameters are $\rho_{f} = \rho_{b} = 0.6$,
$\tau_{E} = 2.0$, $\tau_{I} = 4.5$. }
    \label{fig3}
\end{figure}

\begin{figure}[!h]
    \centering
    \includegraphics[width=\linewidth]{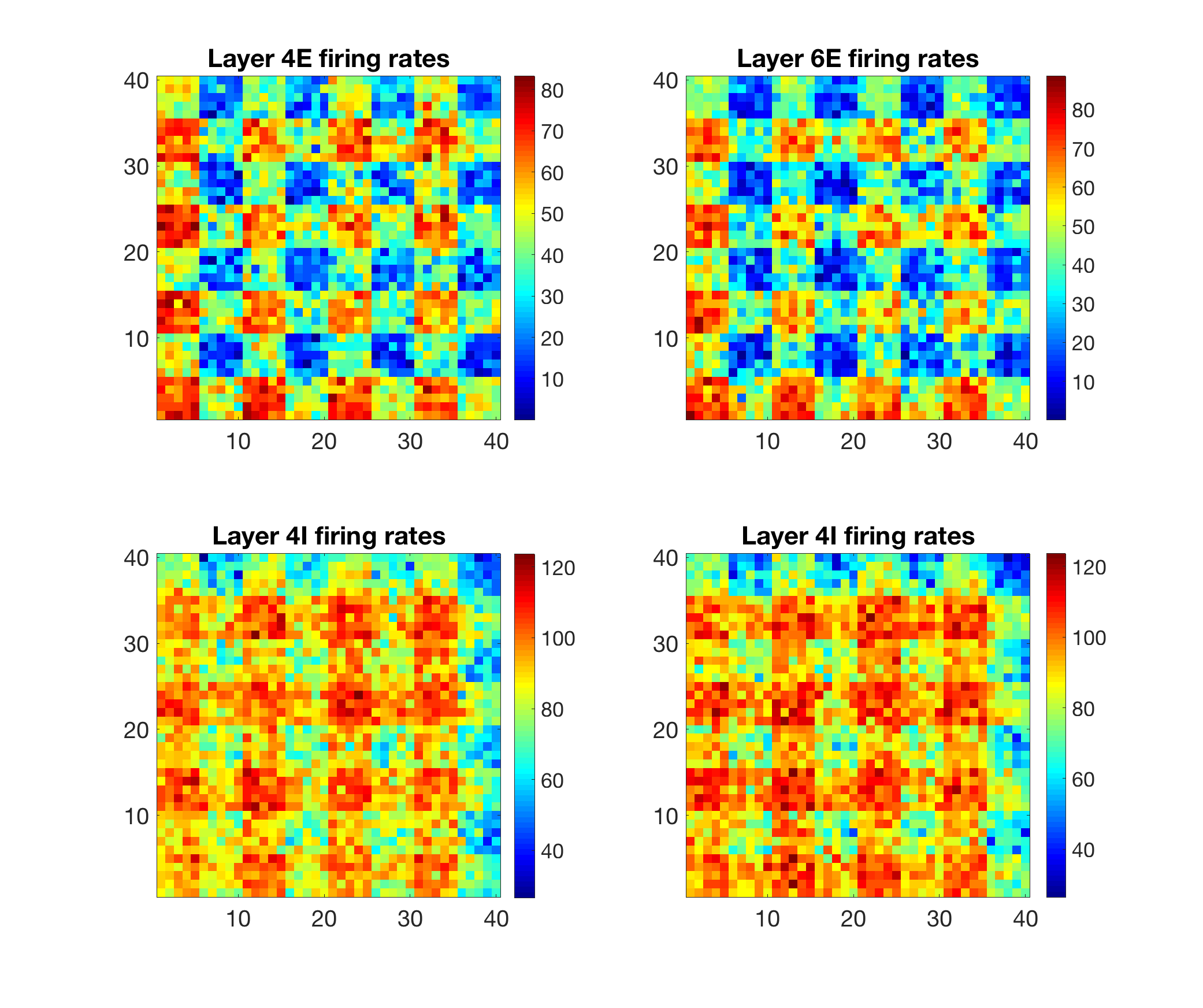}
    \caption{Heat map for E and I firing rates of Model
      III. Parameters are $\rho_{f} =
\rho_{b} = 0.6$, $\tau_{E} = 2.0$, $\tau_{I} = 4.5$. }
    \label{fig4}
\end{figure}

\vskip 1cm

The spiking pattern is another important feature of spiking neuron
models, as neurons pass information through spike trains. Similar as
in many pervious studies, the visual cortex models in our study
exhibit partial synchronizations. Due to recurrent excitation, neurons
tend to form a series of spike volleys, each of which consists of
spikes from a proportion of the total population. This phenomenon is
known as the multiple firing event (MFE), which is believed to be
responsible for the Gamma rhythm \cite{rangan2013dynamics, rangan2013emergent, chariker2015emergent}. Figure \ref{fig5} demonstrates the
raster plot produced by Model I. We can see obvious MFEs that lie between
homogeneous spiking activities and full synchronizations. In addition,
the main control parameter of MFE sizes is the ratio
$\tau_{I}/\tau_{E}$. Higher $\tau_{I}$-to-$\tau_{E}$ ratio is
responsible for more synchronized spike activities. The mechanism of the dependence of MFE sizes on
the synapse delay time is addressed in \cite{li2019stochastic}. Slower $\tau_{I}$
means E-cascade can last longer time, which contributes to larger MFE
sizes. In Figure \ref{fig6}, we can see raster plots of $(2, 2)$
hypercolumn in the feedforward and feedback layer produced by Model
II. Top and bottom plots are the cases with and without synapse connections
between these two layers, respectively. When the feedforward and feedback connection is turned on,
one can see some correlations between MFEs. Later we will use mutual
information to quantify such correlation. 

\vskip 1cm
\begin{figure}[!h]
    \centering
    \includegraphics[scale=0.18]{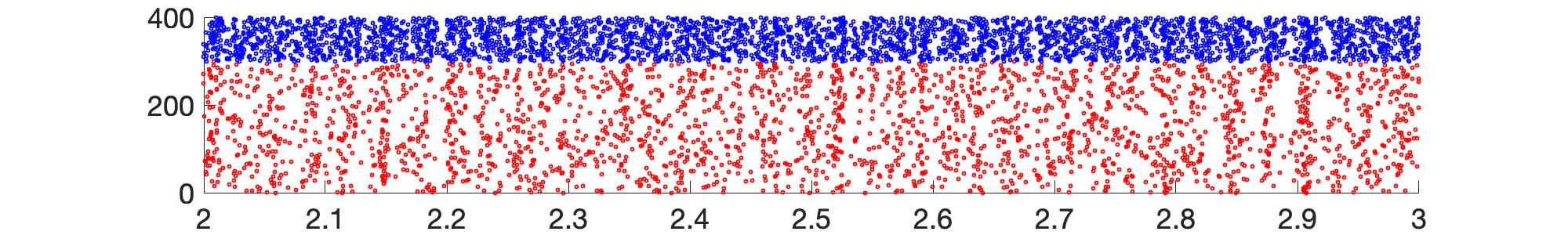}
    \includegraphics[scale=0.18]{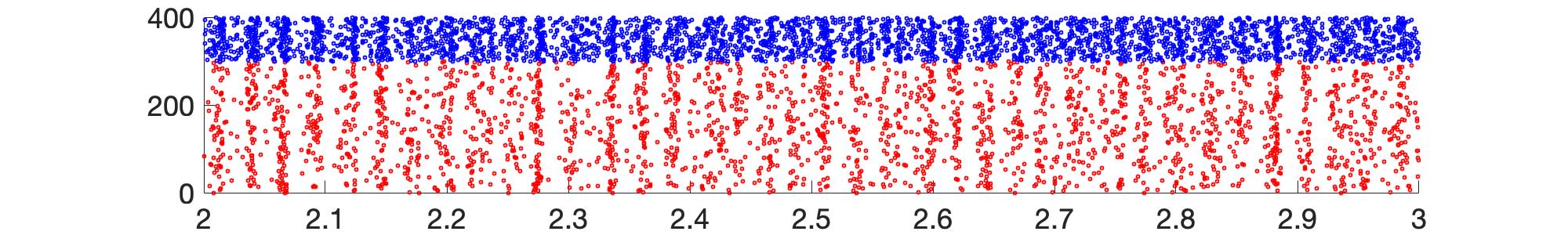}
    \includegraphics[scale=0.18]{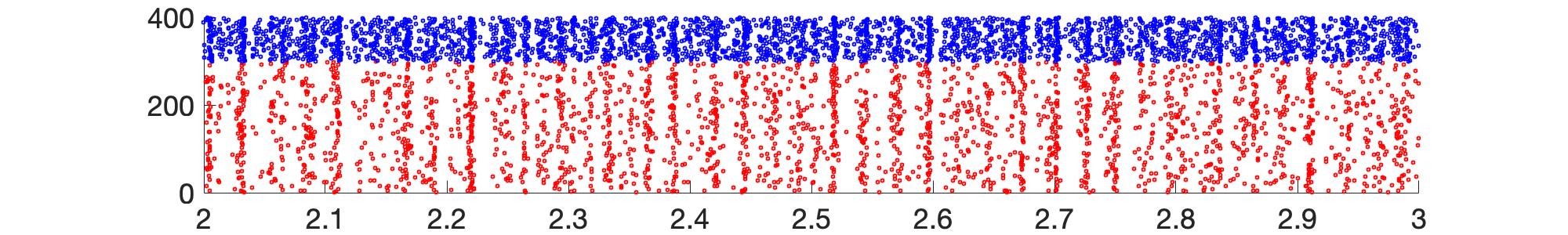}
    \caption{Three raster plots for layer 4 of Model I without feedforward or feedback
connection. Top: $\tau_{E} = 4.5$ ms, $\tau_{I} = 4.5$ ms. Middle: $\tau_{E} =
2.0$ ms, $\tau_{I} = 4.5$ ms. Bottom: $\tau_{E} = 0.9$ ms, $\tau_{I} = 4.5$ ms.}
    \label{fig5}
\end{figure}

\begin{figure}[!h]
    \includegraphics[width = \linewidth]{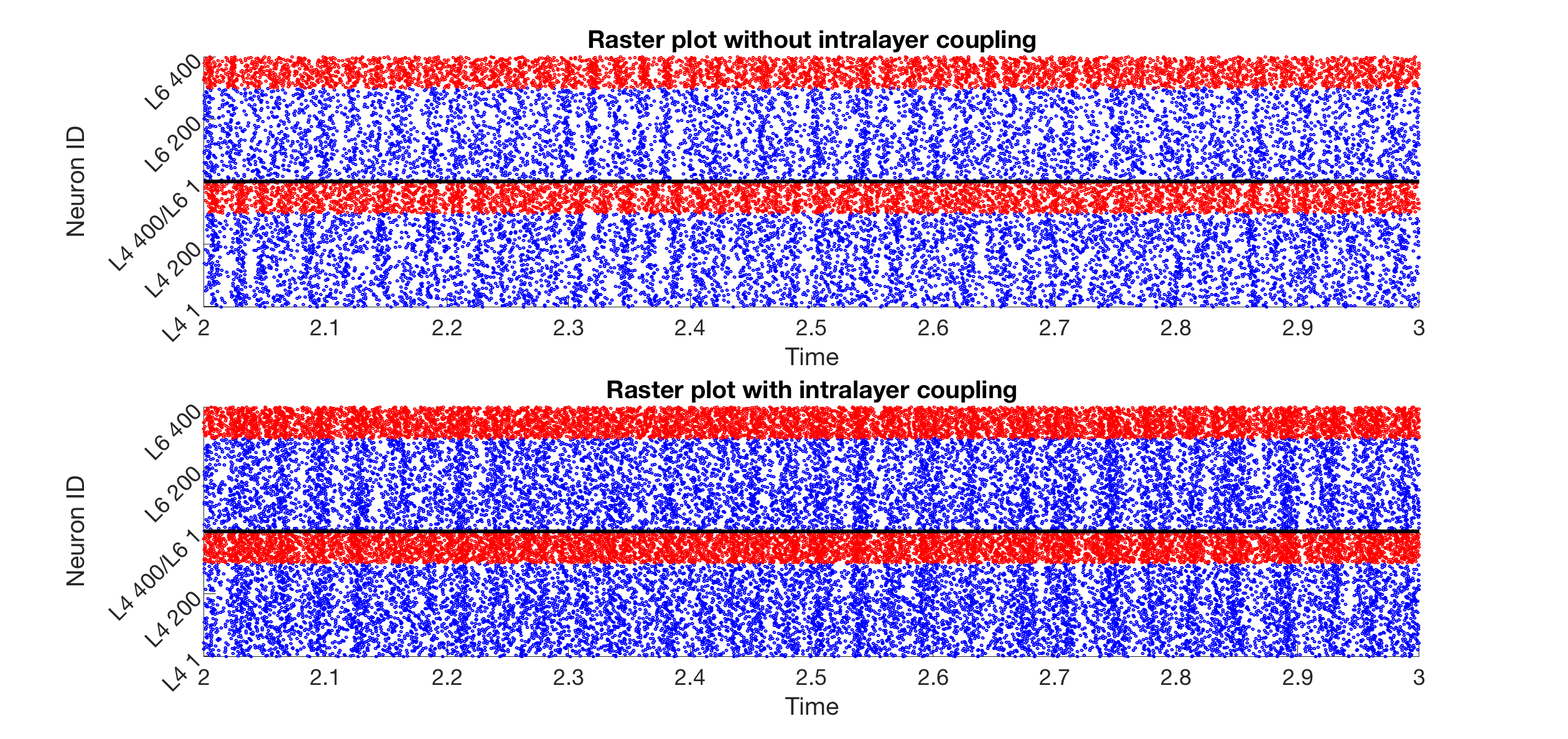}
    \caption{Top. Two raster plots for the (2,2) hypercolumn in layer 4 (top) and
the same hypercolumn in layer 6 (bottom). $\rho_{b} = \rho_{f} = 0$. $\tau_{E} = 2.0$
ms, $\tau_{I} = 4.5$ ms. Time span 2-3s. b. Same parameters but
with intra-layer connections $\rho_{b} = \rho_{f} = 0.6$. } 
    \label{fig6}
\end{figure}

\vskip 1cm

\section{Coarse-grained entropy and law of large numbers}

\subsection{Coarse-grained entropy and coarse-grained information theoretic measures.}
Let $T > 0$ be a time window size that is associated to a coarse-grained
information theoretic measure. Let
$$
  \mathcal{SP} = \{ (t_{i}, \xi_{i}) \,|\, t_{i} \in \mathbb{R}_{+},
  \xi_{i} \in \mathcal{ID} \}
$$
be the configuration space of neuron spikes, where $i$ is the index of a spike, $t_{i}$ is the time
of a spike, and $\xi_{i}$ is the index of the spiking neuron.  Let
$S_{[0, T])} = (s_{1}, \cdots, s_{Z})$ be a
spike train produced by $X_{t}$ within the time window $[0, T)$, where $s_{i} \in
\mathcal{SP}$ denotes the $i$-th spike within this time window. $Z$
is a random variable that represents the total spike count on $[0,
T)$. Let
$$
  \mathcal{SPT} = E \cup \bigcup_{n \geq 1} \mathcal{SP}^{n} 
$$
be the configuration space of neural spike trains, where $E$
represents an empty spike train, and $\mathcal{SP}^{n}$ is the $n$-product
set of $\mathcal{SP}$. Finally, we let $\mathbf{D} = \{1, 2, \cdots, \Sigma
\}$ be a dictionary set
that consists of countably many distinct integers $1, \cdots, \Sigma$. ($\Sigma$
could be infinity.) A function
$\zeta: \mathcal{SPT} \rightarrow \mathbf{D}$ is a coarse-grained
mapping that maps a spike train to an element in $\mathbf{D}$. For
example, the simplest coarse-grained function is the number of spikes
of $X_{t}$ on the interval $[0, T)$. In this case we have $\Sigma =
\infty$.

For $i \in \mathbf{D}$, let 
$$
  p_{i} = \mathbb{P}_{\pi}[\zeta(S_{[0, T)}) = i]
$$
be the probability that the coarse-grained function maps a spike train
$S_{[0, T)}$ to $i$ when starting from the invariant probability
measure $\pi$. The {\it coarse-grained entropy} with respect to $T$
and $\zeta$ is 
$$
  H_{T, \zeta} = -\sum_{i \in \mathbf{D}} p_{i}\log p_{i} \,.
$$
If there are $q$ coarse-grained functions $\zeta_{1}, \cdots,
\zeta_{q}$ and a function $F: \mathbb{R}^{q} \rightarrow \mathbb{R}$,
an {\it information theoretical measure} $\mathfrak{M}$ with respect to $T$,
$\zeta_{1}, \cdots, \zeta_{q}$, and $F$ is given by 
$$
  \mathfrak{M} = F(H_{T, \zeta_{1}}, \cdots, H_{T, \zeta_{q}}) \,.
$$

\begin{rem}
The reason of defining the coarse-grained entropy is to make entropy
and information theoretical measures computable. The classical
definition of neural entropy can only allow at most one spike in each
bin (time window). In a large neural network, neuronal activities are
usually synchronized in some degree. As a result, the necessary time
window size quickly becomes too small to be practical. 

The main simplification we made is to ignore the order of spikes that
are sufficiently close to each other. Needless to say this treatment
loses some information. But it makes information theoretical measures
more computable for large networks. In addition, we map a spike train
within a time window to an integer to further reduce the state
space. This is because the state space of naive spike counting is
still huge. If we naively count the spikes in each hypercolumn in a
model with $K$ local populations, there will be $(N_{E} + N_{I})^{K}$
possible spike counting results even if we assume a neuron cannot
spike twice in a time bin.  
\end{rem}

The definition of the coarse-grained entropy is very general. In
practice, the function $\zeta$ can be given in the following ways to
address different features of the neuronal network. 

\begin{itemize}
  \item[A.] Spike counting for a certain local population. If we are
    interested in the information produced by a certain local
    population, say local population $k$, we can have
$$
  \zeta( S_{[0, T)}) = \zeta \left(\{(t_{1}, \xi_{1})\cdots, (t_{Z},
    \xi_{Z}) \}\right) = \sum_{j = 1}^{Z}
  \mathbf{1}_{\{\mathrm{Label}_{\xi_{j}}(1) = k \}} \,,
$$
where $\mathrm{Label}_{\xi_{j}}(1)$ means the first entry of
$\mathrm{Label}_{\xi_{j}}$, which is the index of the local population of
spike $j$. Here one can replace $k$ by either a set of local
populations, or restrict on a certain type of neurons.
\item[B.] Coarse-grained spike counting. If the state space of spike
  counting is too big to estimate accurately, we can introduce a
  partition function $\Theta: \mathbb{Z}_{\geq 0} \rightarrow \{1, \cdots,
  d\}$, such that $\Theta(n) = i$ if $a_{i} \leq n < a_{i+1}$, where $0 =
  a_{0} < a_{1} < \cdots < a_{d+1} = \infty$ is a given sequence of
  numbers called a ``dictionary''. Then let
 $$
  \zeta( S_{[0, T)}) = \Theta \left (\sum_{j = 1}^{Z}
  \mathbf{1}_{\{\mathrm{Label}_{\xi_{j}}(1) = k\}} \right )\,.
$$
\item[C. ] Spike counting with delays. If one would like to address
  time lags of spike activities, the time window $[0, T)$ can be
  further evenly divided into many sub-windows $[0, T/m), [T/m, 2T/m),
  \cdots, [(m-1)T/m, T)$. Assume we still adopt the coarse grained
  spike counting in {\it B}, we have
$$
  \zeta( S_{[0, T)}) = \sum_{l = 0}^{m-1}d^{l}\Theta\left (\sum_{j = 1}^{Z}
  \mathbf{1}_{\{\mathrm{Label}_{\xi_{j}}(1) = k, t_{j} \in [lT/m, (l+1)T/m)\}} \right )\,.
$$
In other words we take consideration of spike counts in each
sub-windows. Integer $m$ is said to be the ``word length''.
\end{itemize}

\subsection{Stochastic stability and law of large numbers.}
The aim of this section is to prove the law of large numbers holds for
entropy. In other words the entropy is computable by running Monte
Carlo simulations. To do so, we first need to show the stochastic
stability of $X_{t}$.

Let 
$$
  U(\mathbf{x}) = 1 + \sum_{i \in \mathcal{ID}} (H_{i}^{E} + H_{i}^{I})
$$
be a function on $\mathbf{\Omega}$. For any signed distribution $\mu$
on the Borel $\sigma$-algebra of $\Omega$, let
$$
  \|\mu\|_{U} = \sum_{\mathbf{x} \in \mathbf{\Omega}}U(\mathbf{x})|
  \mu( \mathbf{x})| 
$$
be the $U$-weighted total variation norm, and let 
$$
  L_{U}( \Omega) = \{ \mu \mbox{ on } \mathbf{\Omega} \,|\, \|
  \mu\|_{U} < \infty \}
$$ 
be
the collection of probability distributions with finite $U$-norm. In
addition, for any function $\eta( \mathbf{x})$ on $\Omega$, denote
$$
  \sup_{\mathbf{x} \in \Omega} \frac{|\eta ( \mathbf{x})|}{U( \mathbf{x})}
$$
by the $U$-weighted supremum norm.

\begin{thm}
\label{thm1}
$X_{t}$ admits a unique invariant probability distribution $\pi \in
L_{U}( \mathbf{X})$. In addition, there exists positive constants
$c_{1}, c_{2}$ and $\gamma \in (0, 1)$ such that
$$
  \| \mu_{1} P^{t} - \mu_{2} P^{t} \|_{U} \leq c_{1} \gamma^{t}
  \|\mu_{1} - \mu_{2} \|_{U}
$$
for any $\mu_{1}, \mu_{2} \in L_{U}( \Omega)$ and
$$
  \|P^{t} \eta - \pi(\eta) \|_{U} \leq c_{2}\gamma^{t}\| \eta - \pi(\eta)\|_{U}
$$
for any test function $\eta$ with $\|\eta\|_{U} < \infty$. 
\end{thm}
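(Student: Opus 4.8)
The plan is to establish \(U\)-uniform (geometric) ergodicity of \(X_{t}\) by verifying, for a fixed time horizon \(\Delta>0\), the two hypotheses of Harris' theorem in its Meyn--Tweedie form: a geometric drift condition for the skeleton chain \(P^{\Delta}\) and a minorization on the sublevel sets of \(U\). The one subtlety is that the \emph{instantaneous} drift is not negative. The generator \(\mathcal{L}\) of \(X_{t}\) satisfies only \(\mathcal{L}U\le cU\) (every positive contribution to it — external kicks that cross threshold, and delayed E-kicks that cross threshold — is controlled by \(U\)), which already gives non-explosion and \(P^{s}U\le e^{cs}U\); but a state with a large backlog of pending synaptic kicks can trigger a burst of spikes, each injecting up to \(D:=\max_{i\in\mathcal{ID}}|\mathrm{Po}(i)|\) new kicks into \(W(\mathbf{x}):=\sum_{i}(H^{E}_{i}+H^{I}_{i})=U(\mathbf{x})-1\), so \(\mathcal{L}U(\mathbf{x})\) can be large and positive. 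I therefore aim at the discrete bound \(P^{\Delta}U(\mathbf{x})\le\gamma\,U(\mathbf{x})+b\) with \(\gamma\in(0,1)\), \(b<\infty\).

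The mechanism that rescues the drift over a fixed horizon is the exponential refractory period, which caps each neuron's spike rate \emph{uniformly in the state}. Writing \(N_{i}(t)\) for the number of spikes of neuron \(i\) on \([0,t]\), consecutive spikes of \(i\) are separated by at least an \(\mathrm{Exp}(\tau_{\mathcal{R}})\) refractory duration, and these durations are i.i.d.\ and independent of everything else; hence \(T^{(m)}_{i}\ge\sum_{j=1}^{m-1}R^{(j)}_{i}\) with \(R^{(j)}_{i}\) i.i.d.\ \(\mathrm{Exp}(\tau_{\mathcal{R}})\), so \(N_{i}(\Delta)\) is stochastically dominated by \(1+\mathrm{Poisson}(\Delta/\tau_{\mathcal{R}})\) and \(\mathbb{E}_{\mathbf{x}}[N_{i}(\Delta)]\le 1+\Delta/\tau_{\mathcal{R}}\) for every \(\mathbf{x}\). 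Since each spike creates at most \(D\) pending kicks (cascades included), the expected number of kicks \emph{created} during \([0,\Delta]\) is at most \(b_{1}:=D\,K(N_{E}+N_{I})(1+\Delta/\tau_{\mathcal{R}})\), a constant. On the other hand, each of the \(W(\mathbf{x})\) kicks pending at time \(0\) is removed from \(W\) precisely when its own \(\mathrm{Exp}(\tau_{E})\) or \(\mathrm{Exp}(\tau_{I})\) delay clock rings (whether or not its target is refractory), so by linearity of expectation the expected number of such kicks still pending at time \(\Delta\) is at most \(\gamma\,W(\mathbf{x})\) with \(\gamma:=e^{-\Delta/\tau_{\max}}<1\), \(\tau_{\max}:=\max(\tau_{E},\tau_{I})\). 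Adding survivors and newly created kicks, \(\mathbb{E}_{\mathbf{x}}[W(X_{\Delta})]\le\gamma\,W(\mathbf{x})+b_{1}\), hence \(P^{\Delta}U(\mathbf{x})\le\gamma\,U(\mathbf{x})+(1+b_{1})\). Testing this against any invariant measure \(\pi\) forces \(\pi(U)\le(1+b_{1})/(1-\gamma)<\infty\), i.e.\ \(\pi\in L_{U}(\mathbf{\Omega})\).

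For the minorization, observe that each sublevel set \(C_{n}:=\{\mathbf{x}\in\mathbf{\Omega}:U(\mathbf{x})\le n\}\) is a \emph{finite} set: every \(V_{i}\) lives in the finite alphabet \(\Gamma\), while \((H^{E}_{i},H^{I}_{i})\in\mathbb{N}^{2}\) with \(\sum_{i}(H^{E}_{i}+H^{I}_{i})\le n-1\). It therefore suffices to exhibit one reference state, reachable with positive probability from every \(\mathbf{x}\in C_{n}\) within a single fixed time \(t_{\ast}\); a convenient choice is the state \(\mathbf{z}\) with all neurons refractory and all \(H^{E}_{i}=H^{I}_{i}=0\), reached by first driving every neuron to fire (at most \(M\) external kicks each) and then, during a window in which no external kick arrives and no neuron leaves the refractory state, letting every pending kick ring — there are only finitely many of them, the count on \(C_{n}\) plus the bounded number created by the firings, and each of these events has positive probability. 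This yields \(P^{t_{\ast}}(\mathbf{x},\cdot)\ge\varepsilon\,\delta_{\mathbf{z}}(\cdot)\) for all \(\mathbf{x}\in C_{n}\) with some \(\varepsilon>0\), so every \(C_{n}\) is small; aperiodicity is immediate from the exponential holding times, and one restricts \(\mathbf{\Omega}\) to the communicating class of \(\mathbf{z}\), which for the connection graphs of Models~I--III (every neuron having presynaptic neighbours of both types) carries the full dynamics.

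Feeding the skeleton drift \(P^{\Delta}U\le\gamma U+b\), the smallness of the \(C_{n}\), and aperiodicity into Harris' theorem (equivalently the Meyn--Tweedie \(U\)-uniform ergodic theorem) produces the unique invariant \(\pi\in L_{U}(\mathbf{\Omega})\) together with a spectral gap for \(P^{\Delta}\) in the \(U\)-weighted total-variation norm; this single statement gives both the measure-side contraction \(\|\mu_{1}P^{n\Delta}-\mu_{2}P^{n\Delta}\|_{U}\le C\tilde\gamma^{\,n}\|\mu_{1}-\mu_{2}\|_{U}\) and its dual function-side form \(\|P^{n\Delta}\eta-\pi(\eta)\|_{U}\le C\tilde\gamma^{\,n}\|\eta-\pi(\eta)\|_{U}\), and interpolating over \(s\in[0,\Delta]\) with the short-time bound \(P^{s}U\le e^{cs}U\) upgrades these to the continuous-time estimates in the statement for a suitable \(\gamma\in(0,1)\). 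The main obstacle is the drift step: the point is to avoid chasing a negative \(\mathcal{L}U\), which genuinely does not exist for this \(U\), to recognize that the refractory mechanism makes the kick-creation rate state-independent over a fixed horizon, and to run the two stochastic-domination arguments — Poisson domination of the spike counts via refractory durations, and removal of each pending kick by its own independent delay clock — cleanly enough to absorb the bookkeeping around refractory targets, the floor at \(-M_{r}\), and the randomized non-integer weights \(S_{QQ'}\).
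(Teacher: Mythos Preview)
Your proposal is correct and follows exactly the strategy the paper indicates: verify a discrete-time Lyapunov drift $P^{h}U\le\gamma U+K$ for the same $U(\mathbf{x})=1+\sum_{i}(H^{E}_{i}+H^{I}_{i})$, establish a minorization on the (finite) sublevel sets $\{U\le R\}$, and invoke the Harris/Meyn--Tweedie machinery; the paper itself does not give the details but refers to Theorem~3.1 of \cite{li2019stochastic}. Your write-up supplies precisely those details --- the Poisson domination of per-neuron spike counts via refractory gaps to bound kick creation, and the independent exponential delay clocks to control survival of pending kicks --- so there is nothing to correct, only to note that you have made explicit what the paper leaves to its reference.
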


We skip the proof of Theorem \ref{thm1} here because it is almost
identical to that of Theorem 3.1 in \cite{li2019stochastic}. The only difference
is that postsynaptic neurons in the model of \cite{li2019stochastic} is chosen
randomly {\it after} a spike, which does not affect the proof. To prove the existence of an invariant probability
measure $\pi$ and the exponential convergence to it, we need to
construct a Lyapunov function $U( \mathbf{x})$ such that
$$
  P^{h}U \leq \gamma U + K
$$
for some $h > 0$, $\gamma \in (0, 1)$ and $K < \infty$. In addition the
``bottom'' of $U$, denoted by $C = \{ \mathbf{x} \in \Omega \,|\, U(
\mathbf{x}) \leq R \}$ for some $R > 2K(1 - \gamma)^{-1}$, must
satisfies the minorization condition, which means there exists a
constant $\alpha > 0$ and a probability distribution on $\Omega$, such
that $P^{h}(\mathbf{x}, \cdot) \geq \alpha \nu(\cdot)$ uniformly for
$\mathbf{x} \in C$. Then the existence and uniqueness of $\pi$ and the
exponential speed of convergence to $\pi$ follows from
\cite{li2019stochastic}. We refer the proof of Theorem 3.1 in \cite{li2019stochastic} for
the full detail.

The next Theorem is about the computability of information-theoretical
measures. The way of estimating $H_{T, \zeta} $ is called ``plug-in
estimate'', which is known to convergence if the spike train $S_{[0,
  T)}$ is i.i.d. sampled from a certain given probability distribution
\cite{antos2001convergence} (also see \cite{paninski2003estimation}). However, in our case $S_{[0, T)}$ is the spike count
generated by a Markov chain, which depends on the sample path of
$X_{t}$ on $[0, T]$. Therefore, one needs to sample $S_{[0, T)}$ by
running $X_{t}$ over many time intervals $[0, T), [T, 2T),
\cdots$. Hence the result in \cite{antos2001convergence} does not apply
directly. Instead, we need to use the concept of {\it sample path
  dependent observable} to show that $H_{T,
  \zeta}$ is a computable quantity.

\begin{thm}
\label{thm2}
Let $\mathcal{T} = NT$ be the length of a trajectory of $X_{t}$. For any coarse-grained mapping $\zeta$ and any $i \in \mathbf{D}$,
denote the empirical probability of $\zeta = i$ by
$$
  \hat{p}_{i} = \frac{1}{N} \sum_{j = 1}^{N} \mathbf{1}_{\{
    \zeta(S_{[(j-1)T, jT)}) = i\}}(X_{t}) \,.
$$
We have 
$$
  \lim_{N\rightarrow \infty} \hat{p}_{i} =
  \mathbb{P}_{\pi}[\zeta(S_{[0, T)}) = i] = p_{i} \,. 
$$
\end{thm}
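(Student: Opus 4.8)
The plan is to reduce Theorem~\ref{thm2} to the ordinary ergodic theorem for a geometrically ergodic discrete-time Markov chain together with a martingale strong law of large numbers. The obstruction to applying Theorem~\ref{thm1} directly is that the summand $\mathbf 1_{\{\zeta(S_{[(j-1)T,jT)})=i\}}$ is not a function of $X_t$ at a single instant: it is a measurable functional of the whole path segment $(X_t)_{t\in[(j-1)T,jT)}$. To handle this I would first pass to the skeleton chain $\tilde X_j:=X_{jT}$, $j\ge0$, a discrete-time Markov chain with transition kernel $P^{T}$. Since $\pi=\pi P^{t}$ for all $t$, the measure $\pi$ is invariant for $\tilde X_j$, and restricting the estimates of Theorem~\ref{thm1} to $t\in\{T,2T,\dots\}$ shows that $\tilde X_j$ is geometrically ergodic in the $U$-weighted norm with rate $\gamma^{T}<1$. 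Next, by the Markov property and time homogeneity, the conditional law of $(X_{(j-1)T+s})_{s\in[0,T)}$ given $\mathcal F_{(j-1)T}$ depends only on $X_{(j-1)T}$, so that
\[
  g(\mathbf x):=\mathbb P_{\mathbf x}\bigl[\zeta(S_{[0,T)})=i\bigr]
\]
defines a measurable function on $\mathbf\Omega$ with $0\le g\le 1$, with $\pi(g)=p_i$, and with
\[
  \mathbb E\bigl[\mathbf 1_{\{\zeta(S_{[(j-1)T,jT)})=i\}}\;\big|\;\mathcal F_{(j-1)T}\bigr]=g(\tilde X_{j-1}),\qquad j\ge1 .
\]

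I would then split
\[
  \hat p_i=\frac1N\sum_{j=1}^{N}g(\tilde X_{j-1})+\frac1N\sum_{j=1}^{N}\Bigl(\mathbf 1_{\{\zeta(S_{[(j-1)T,jT)})=i\}}-g(\tilde X_{j-1})\Bigr)=:A_N+B_N
\]
and treat the two pieces separately. For $A_N$: the function $g$ is bounded, hence has finite $U$-weighted supremum norm (recall $U\ge1$), so the ergodic theorem for the geometrically ergodic chain $\tilde X_j$ --- which one can extract from the contraction bound of Theorem~\ref{thm1} via a variance-summability and Borel--Cantelli argument, or simply quote from Meyn--Tweedie --- gives $A_N\to\pi(g)=p_i$ almost surely, independently of the initial law of $X_0$. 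For $B_N$: the increments $M_j:=\mathbf 1_{\{\zeta(S_{[(j-1)T,jT)})=i\}}-g(\tilde X_{j-1})$ form a bounded martingale difference sequence for the filtration $(\mathcal F_{jT})_{j\ge0}$, since $\mathbb E[M_j\mid\mathcal F_{(j-1)T}]=0$ and $|M_j|\le1$; hence $\sum_{j\ge1}j^{-2}\mathbb E[M_j^{2}]<\infty$, so $\sum_j M_j/j$ converges a.s. and Kronecker's lemma gives $B_N\to0$ almost surely. Adding the two limits yields $\hat p_i\to p_i$ almost surely.

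The main obstacle --- really the only non-routine point --- is the reduction in the first paragraph: recognizing that although each summand is a path-segment functional, conditioning on the left endpoint of the segment decomposes it into a bounded \emph{state} observable of the skeleton chain (to which Theorem~\ref{thm1} applies) plus a mean-zero, bounded martingale remainder (to which the martingale law of large numbers applies). One must also dispatch the harmless technicalities that $g$ is measurable and that the skeleton chain inherits geometric ergodicity and positive Harris recurrence. An equivalent route, if one prefers to avoid the martingale step, is to lift $X_t$ to the Markov chain on the path space $D([0,T);\mathbf\Omega)$ whose $j$-th state is the segment $(X_t)_{t\in[(j-1)T,jT)}$: geometric ergodicity of this lifted chain again follows from Theorem~\ref{thm1}, its invariant measure is the law of $(X_t)_{t\in[0,T)}$ under $\mathbb P_\pi$, and $\mathbf 1_{\{\zeta(\cdot)=i\}}$ is a bounded observable on path space, so the classical ergodic theorem applies directly and yields the claim.
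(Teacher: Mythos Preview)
Your argument is correct. The decomposition $\hat p_i=A_N+B_N$ into a skeleton-chain ergodic average and a bounded martingale-difference remainder is sound: $g$ is bounded so $\|g\|_U<\infty$ and the ergodic theorem for the $P^T$-chain (which inherits geometric ergodicity from Theorem~\ref{thm1}) handles $A_N$, while Kronecker's lemma dispatches $B_N$. Your alternative path-space lifting is also correct.

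The paper takes a shorter but less self-contained route: it simply observes that $Y_n=\mathbf 1_{\{\zeta(S_{[(n-1)T,nT)})=i\}}$ is a \emph{sample-path dependent observable} in the sense of~\cite{li2019stochastic}, notes that $\mathbb E[Y_n^2\mid X_{(n-1)T}=\mathbf x]\le 1$ uniformly, and then invokes Theorem~3.3 of that reference as a black box to conclude the law of large numbers. In effect the paper outsources exactly the work you carry out; your path-space lifting alternative is essentially what that cited theorem encapsulates. What your approach buys is a self-contained proof that does not rely on the external reference and makes the two mechanisms (ergodicity of the skeleton, martingale averaging of the within-block randomness) explicit; what the paper's approach buys is brevity, at the cost of deferring the substance to another paper.
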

\begin{proof}
The proof of this theorem relies on the concept of {\it Markov sample
  path dependent observables}. Let $X_{t}$ be a Markov process. A
function $Y$ is said to be a Markov sample-path dependent observable on
an interval $[t_{1}, t_{2})$ if 
\begin{enumerate}
  \item[i] $Y$ is a real-valued function on $C_{\Omega}([t_{1},
    t_{2}))$, where $C_{\Omega}([t_{1},
    t_{2}))$ is the collection of cadlag paths from $t_{1}$ to $t_{2}$
    on $\Omega$.
\item[ii] The law of $Y$ only depends on the value of $X_{t_{1}}$.
\end{enumerate}

Now let $Y_{1}, Y_{2}, \cdots, Y_{n}, \cdots$ be a sequence of
sample-path dependent observables on $[0, T)$, $[T, 2T)$, $\cdots$,
$[(n-1)T, nT)$ respectively. Since the law of $X_{t}$ converges to
$\pi$ as $t \rightarrow \infty$, by Theorem 3.3 in \cite{li2019stochastic}, we
have the law of large numbers for $\{Y_{n}\}$, i.e., 
$$
  \lim_{N \rightarrow \infty} \frac{1}{N}\sum_{n = 1}^{N}Y_{n} =
  \mathbb{E}_{\pi}[Y_{1}] 
$$
provided there exists a constant $M < \infty$, such that
$\mathbb{E}[Y^{2}_{n} \,|\, X_{(n-1)T} = \mathbf{x}] < M$ uniformly
for all $\mathbf{x} \in \mathbf{\Omega}$.

Hence we only need to construct a sequence of sample-path dependent
observables $\{Y_{n}\}$ whose expectations equal to $\hat{p}_{i}$. Let 
$$
  Y_{n} = \mathbf{1}_{\{\zeta(S_{[(n-1)T, nT)}) = i\}} \,.
$$
It is easy to see that $Y_{n}$ is Markov because $X_{t}$ is a Markov
process. In addition, $\mathbb{E}[Y_{n}^{2}]\leq 1$ uniformly because
$Y_{n}$ is a indicator function. Therefore, $\{Y_{n}\}_{n =
  1}^{\infty}$ satisfies the law of large numbers. This completes the
proof. 

\end{proof}

\subsection{Numerical result and discussion}
We did the following two numerical simulations to demonstrate the role
of coarse-grained entropy in our network models. Consider model I
without feedforward or feedback connections. Without loss of
generality we only compute the entropy of the feedforward layer. The
synapse delay times are chosen to be $\tau_{E} = 2$ ms and $\tau_{I} =
4.5$ ms. 

The first test is about the entropy rate per second with
increasing word length. We use spike counting with delays to define
the coarse-grained mapping $\zeta$ (case C in Section 3.1). The
duration of each sub-window is $T/m = 5$ ms. The time window size goes
through $T = 5 $ ms (m=1) to $T = 60$ ms (m = 12). Then we define the partition
function $\Theta: \mathbf{Z}_{+} \rightarrow \{0, 1, 2, 3\}$ with
$\Theta(n) = 0$ for $n < 4$, $\Theta(n) = 1$ for $4 \leq n < 8$,
$\Theta(n) = 2$ for $8 \leq n < 12$, and $\Theta(n) = 3$ for $n \geq
12$. Figure \ref{fig7} shows the entropy rate (entropy divided by the
time window size) versus time window size. The curve is similar to
that given in \cite{strong1998entropy}. Starting from word length $=2$, and until
the word length being too long for estimating entropy, the entropy rate declines
linearly. But when a ``word'' is too long such that the entropy estimation
is badly under-sampled, the estimated entropy rate declines
faster-than-linear, as explained in \cite{strong1998entropy}. We use
extrapolation to estimate the entropy rate at the infinite window size
limit, as shown in Figure 3 of \cite{strong1998entropy}.  

\begin{figure}[!h]
\centerline{\includegraphics[width = 0.6\linewidth]{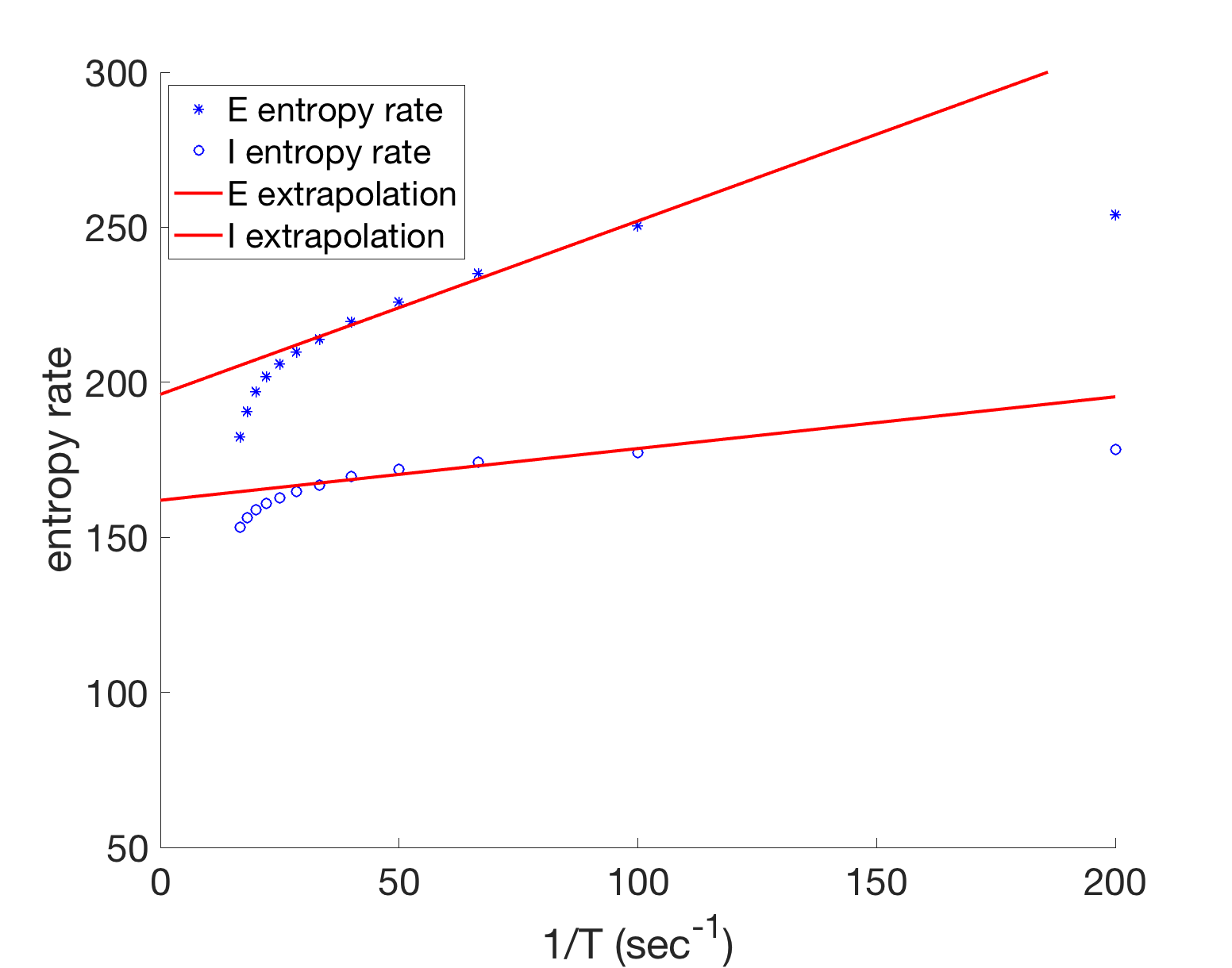}}
\caption{Entropy rate versus time window length. Samples are collected
  from eight trajectories with length $10000$. Linear extrapolation
  uses points with respect to $m = 2, 3, \cdots, 7$. }
\label{fig7}
\end{figure}

The second test is the entropy for different degrees of
synchrony. Here we only consider word length $ = 1$ with time window size
$T = 15$ ms, with a more refined partition function $\Theta:
\mathbb{Z}_{+} \rightarrow \{0, 1, \cdots, 45\}$ such that $\Theta(n)
= i$ for $3i \leq n < 3(i+1)$ if $i = 0, \cdots, 44$ and $\Theta(n) =
45$ if $n \geq 135$. The excitatory synapse delay time is set to be $\tau_{I} =
4.5$ ms. And $\tau_{I}$-to-$\tau_{E}$ ratio varies between $0.5$ and
$3.5$. In addition we let $S_{EE} = 6$ to increase the degree of synchrony. The entropy rate versus $\tau_{I}/\tau_{E}$ is plotted in
Figure \ref{fig8}. We can see that the entropy reaches a peak in the middle
of the tuning curve for both E and I neurons. The peak locations are
different for E and I neurons partially due to different population sizes. The heuristic reason is
straightforward. When $\tau_{I}/\tau_{E}$ is small, the spiking
pattern is homogeneous, and the distribution of spike counts in each time
window concentrates at a few small numbers. Larger $\tau_{I}/\tau_{E}$
means larger MFE sizes, which makes spike counts in each time window
more diverse. But when $\tau_{I}/\tau_{E}$ is too large and such that
all MFEs are large, such diversity shows slight decreases. In other
words, our simulation shows that the partial synchronization, rather
than the full synchronization, helps a neuronal network to produce
more information. 

\begin{figure}[!h]
\centerline{\includegraphics[width = 0.6\linewidth]{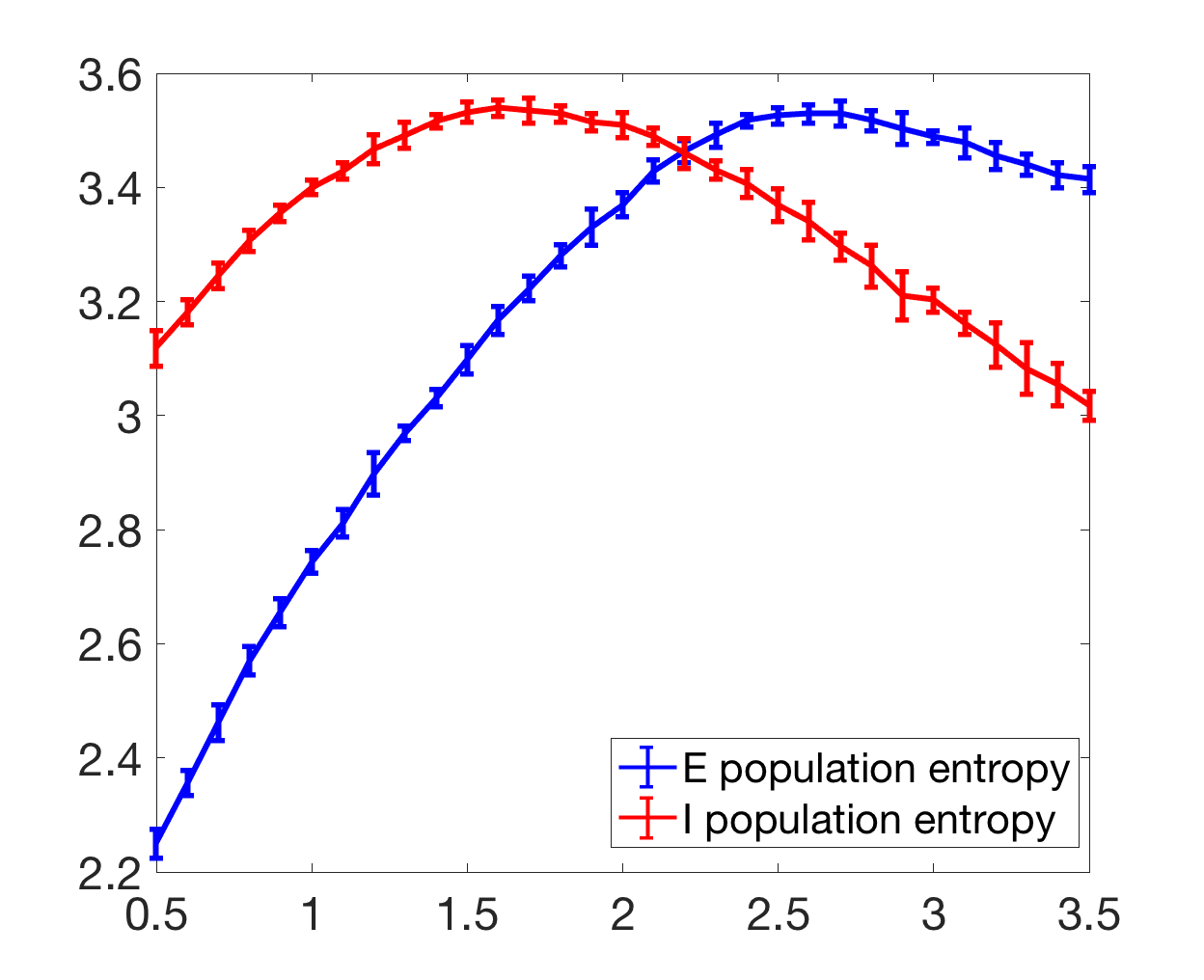}}
\caption{Entropy versus delay time for Model I. Each simulation runs up to $\mathcal{T} = 50$. }
\label{fig8}
\end{figure}

\section{Mutual information}
\subsection{Mutual information and conditional mutual information}
Based on the framework of spike train space defined in Section 3.1,
one can also define the mutual information. Let $T > 0$ be a fixed
time window size. A {\it coarse-grained
function $\mathcal{SPT} \rightarrow \mathbf{D}$ with respect
population set $C \subset \{1, \cdots, K\}$} is denoted by $\zeta_{C}$
if it gives joint spike count distributions from populations that belong to
$C$. Here $C = \{C_{1}, \cdots, C_{|C|}\}$ is a generic subset of $\{1, \cdots, K\}$. We have
\begin{equation}
\label{eq5-1}
  \zeta_{C}(S_{[0, T]}) = \sum_{n = 1}^{|C|} d^{n-1}\left (\Theta \left ( \sum_{j = 1}^{Z} \mathbf{1}_{\{
    \mathrm{Label}_{\xi_{j}}(1) \in C_{n}\}} \right ) \right )\,,
\end{equation}
where $\mathrm{Label}_{\xi_{j}}(1)$ means the first entry (index of
local population) of $j$-th spiked neuron, and $\Theta:
\mathbb{Z}_{\geq 0} \rightarrow \{1, \cdots, d\}$ is a partition
function that maps a spike count to an integer. Here the definition of
$\zeta_{C}$ can also be extended to the case of a particular type 
of neurons, or the case of spike counts with delays. 

Now consider two disjoint sets of
local populations $A, B \subset \{1, \cdots, K\}$. The {\it mutual
  information} is an information theoretical measure that is given
by
$$
  \mathrm{MI}_{T, \zeta}(A:B) = H_{T, \zeta_{A}} + H_{T, \zeta_{B}} - H_{T,
    \zeta_{A \cup B}} \,.
$$
In other words, the mutual information measures the information shared
by populations in $A$ and populations in $B$ when the spike count is
measured by the coarse-grained map $\xi$. 

The coarse-grained mutual information $\mathrm{MI}_{T, \zeta}(A:B)$ is
the mutual information between two spike trains produced by local
populations $A$ and $B$ that are processed by coarse-grained function
$\zeta_{A}$ and $\zeta_{B}$ respectively. By the data processing
inequality \cite{cover2012elements}, $\mathrm{MI}_{T, \zeta}(A:B)$ is
no greater than the actual mutual information between two unprocessed
spike trains on $A$ and $B$ respectively. Therefore, the ``true''
mutual information is larger than our measurement by using
coarse-graining functions.

\subsection{Mutual information in visual cortex models}
The following simulations aim to use mutual information to study the
correlation in our visual cortex models. We present the following $4$
results. 

{\bf Simulation I: Role of feedforward and feedback.} In
this study we first consider Model I, which is a the feedforward-feedback
network with only two local populations. We let the time window size
be $10$ ms with word length $= 1$. The partition function $\Theta$ maps $\mathbb{Z}_{+}$ to $\{0, 1,
\cdots, 10\}$ such that $\Theta(n) = i$ for $5i \leq n < 5(i+1)$ if $i
= 0, 1, \cdots, 9$ and $\Theta(n) = 10$ if $n \geq 50$. The mutual
information between the feedforward layer and the feedback layer
versus the coupling strength is plotted
in Figure \ref{fig9} Left. We consider three cases: (i) $\rho_{f} =
\rho_{b} = \rho$, (ii) $\rho_{f} = \rho, \rho_{b} = 0$, and (iii)
$\rho_{f} = 0, \rho_{b} = \rho$. One can see a clear increase of mutual
information between two layers with increasing coupling. And the
presence of both feedforward and feedback connection can significantly
increase the mutual information between two layers. This is
somehow expected: communications between neuronal networks can
increase the information shared between them. The same simulation is
done in Model II, which has many hypercolumns and geometric
structures. The mutual information is measured between $(2,2)$
hypercolumn of layer 4 and $(2,2)$ hypercolumn of layer 6. The corresponding partition function for Model II is $\Theta(n) = i$ for $10i \leq n < 10(i+1)$ if $i
= 0, 1, \cdots, 9$ and $\Theta(n) = 10$ if $n \geq 100$.  All other
parameters are same as those in Model I. The result is plotted in
Figure \ref{fig9} Right. We can see the same pattern as in Figure
\ref{fig9} Left. Note that when $\rho_{f} = 0, \rho_{b} = \rho$ and
$\rho$ is very large, the network becomes almost fully
synchronized. The mutual information slightly decreases in this regime
because the entropy decreases in a very synchronized network,
as seen in Figure \ref{fig8}. 

\vskip 1cm

\begin{figure}[!h]
    \centering
    \includegraphics[scale=0.18]{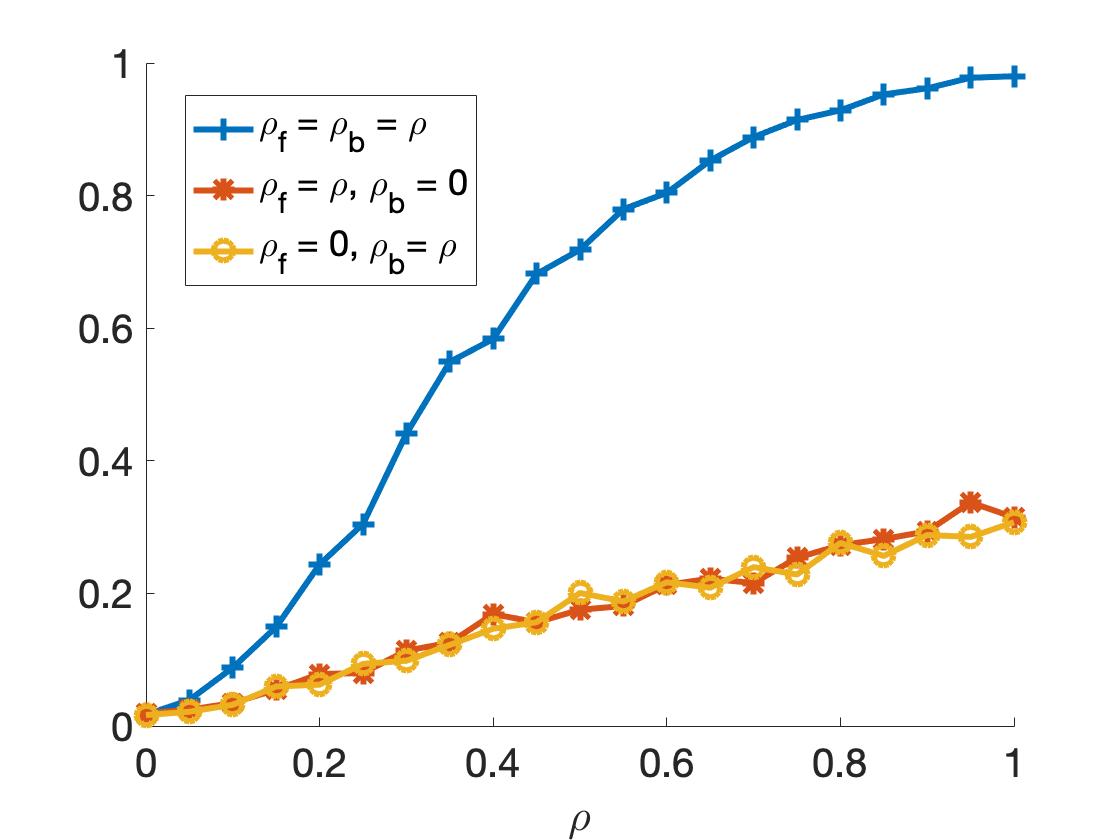}
    \includegraphics[scale=0.18]{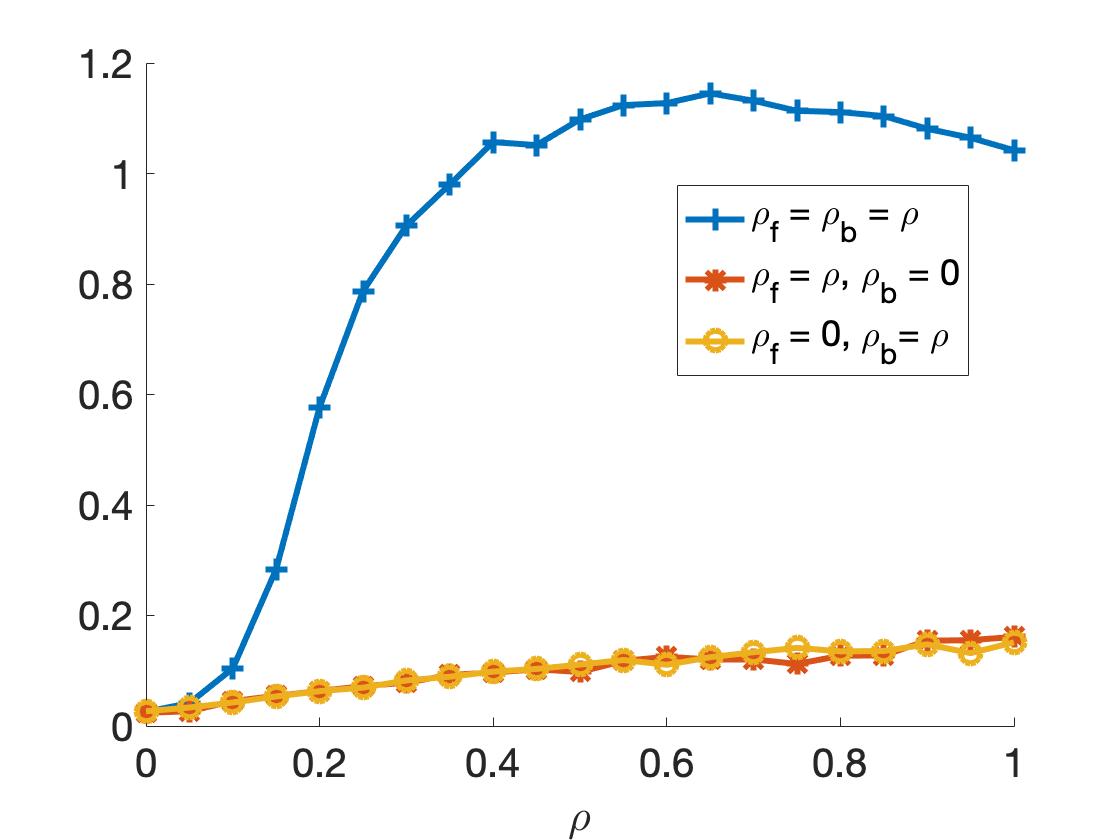}
    \caption{Mutual information versus coupling strength for Model I
      (left) and Model II (right). }
    \label{fig9}
\end{figure}

\vskip 1cm

{\bf Simulation II: Mutual information versus distance.} The
second simulation considers the dependence of mutual information on
the distance between two hypercolumns. We fix the coupling strength as
$\rho_{f} = \rho_{b} = 0.6$ and study mutual information between
hypercolumns for model II and model III. In Figure \ref{fig10} Top,
mutual information between $(1,1)$ hypercolumn and all other
hypercolumns are demonstrated for model II. Model III has orientational
columns. In Figure \ref{fig10} Bottom, we show the mutual information
between the vertical-preferred orientation column in $(1,1)$
hypercolumn and all other orientation columns in model III. The
logarithmic scale is used because the entropy at the bottom left corner is much larger than all
mutual information. We can
clearly see a decrease of mutual information with increasing
distance. This is consistent with our results in \cite{li2019stochastic} that
correlation of MFE sizes decays quickly with the distance. The fast
decline of mutual information is also supported by experimental
evidence. It is known
that MFEs are responsible for the Gamma rhythm in the cortex, which is
known to be local in many scenarios \cite{goddard2012gamma, lee2003synchronous,
  menon1996spatio}. In addition, in Figure \ref{fig10} Bottom, we can also see that
orientation columns with the same preferred orientations share higher
mutual information than those with opposite preferred orientations.

\vskip 1cm

\begin{figure}[!h]
    \centering
    \includegraphics[width=\linewidth]{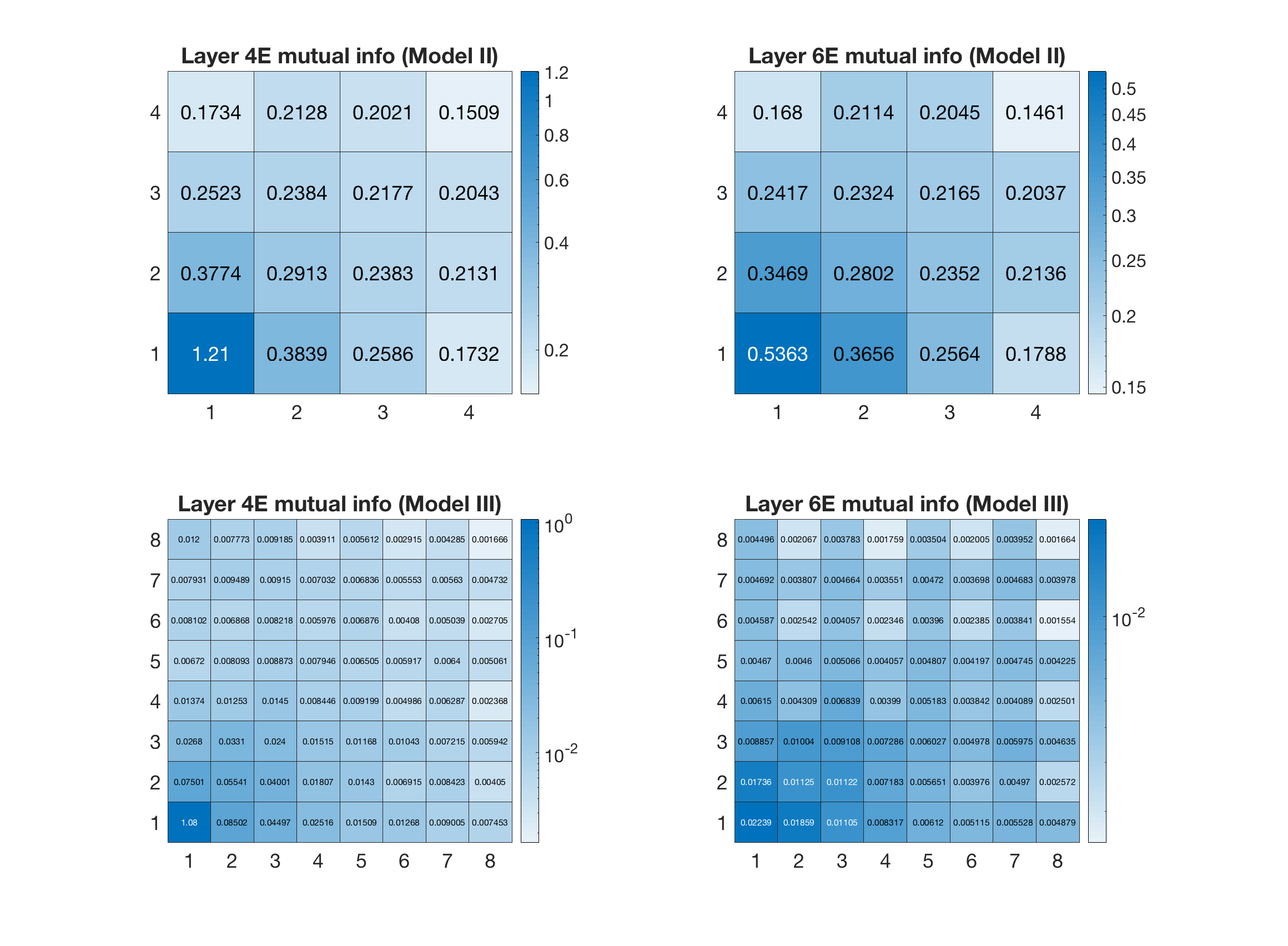}
    \caption{Top: heat map of mutual information with $(1,1)$
hypercolumn in Model II. Top left: Layer 4. Top right: Layer 6. Bottom: heat map
of mutual information with vertical-preferred orientation column in $(1,1)$
hypercolumn in Model III. Bottom left: Layer 4. Bottom right: Layer 6.}
    \label{fig10}
\end{figure}

\vskip 1cm
\section{Systematic measures: degeneracy and complexity}
\subsection{Definitions and rigorous results.}
As introduced in the introduction, systematic measures, including degeneracy, complexity, redundancy, and
robustness, are first proposed in the study of systems biology. When a
biological network is too large to be investigated in full detail,
systematic measures are used to describe the global characteristics of
the network. In \cite{tononi1999measures, li2012quantification}, degeneracy and complexity are quantified
as linear combinations of mutual information. This section uses the
idea of coarse-grained entropy to define two
systematic measures, i.e.,  degeneracy and complexity, for a spiking
neuronal network. The case of other systematic measures like the
redundancy \cite{tononi1999measures} can be studied analogously.

The definition of degeneracy and complexity relies on multivariate
mutual information. Still consider a neuronal network with $K$ local
populations. Let $\zeta_{I}$ be a coarse-grained function with
respect to $I \subset \{1, \cdots, K\}$. For three sets $A, B, C \subset \{1,
\cdots, K\}$, the {\it  multivariate mutual information} $MI(A:B:C)$ is given
as
\begin{eqnarray}
\label{MMI}
&&\\\nonumber
 MI_{T, \zeta}(A:B:C)& = & MI_{T, \zeta}(A:C) + MI_{T, \zeta}(B:C) -
                           MI_{T, \zeta}(A\cup B:C)\\\nonumber
&=& H_{T, \zeta_{A}} + H_{T, \zeta_{B}} +  H_{T, \zeta_{C}} -
    H_{T,\zeta_{A \cup B}} - H_{T,\zeta_{B \cup C}}  - H_{T,\zeta_{A
    \cup C}} + H_{T,\zeta_{A \cup B \cup C}}  \nonumber\,.
\end{eqnarray}

The following proposition follows immediately from the definition of
multivariate mutual information.

\begin{prop}
\label{limitcases}
\begin{itemize}
  \item[(a)] If $\zeta_{A}(S_{[0, T)})$, $\zeta_{B}(S_{[0, T)})$, and $\zeta_{C}(S_{[0,
    T)})$ are identical random variables, then $MI_{T, \zeta}(A:B:C) =   H_{T, \zeta_{A}}$. 
\item[(b)] If  $\zeta_{A}(S_{[0, T)})$, $\zeta_{B}(S_{[0, T)})$ are independent
  from $\zeta_{C}(S_{[0, T)})$, then $MI_{T, \zeta}(A:B:C) = 0$. 
\end{itemize}
\end{prop}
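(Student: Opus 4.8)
The plan is to reduce the proposition to the defining identity \eqref{MMI} together with two textbook facts about Shannon entropy. The first step is to make precise that each quantity $H_{T,\zeta_{A\cup B}}$ is a genuine \emph{joint} entropy: by \eqref{eq5-1} the coarse-grained map $\zeta_{A\cup B}$ merely packs the per-population coarse counts $\Theta(\cdot)$ of the populations in $A$ and in $B$ into a single base-$d$ integer, so --- assuming $A,B,C$ pairwise disjoint and $d$ large enough that this packing is injective, which are the standing conventions for these systematic measures --- $\zeta_{A\cup B}(S_{[0,T)})$ is a measurable bijection of the pair $\big(\zeta_{A}(S_{[0,T)}),\zeta_{B}(S_{[0,T)})\big)$. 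Hence $H_{T,\zeta_{A\cup B}}=H\big(\zeta_{A}(S_{[0,T)}),\zeta_{B}(S_{[0,T)})\big)$, and likewise $H_{T,\zeta_{A\cup B\cup C}}$ equals the joint entropy of the triple. Since $\Theta$ takes values in the finite set $\{1,\dots,d\}$, every entropy occurring in \eqref{MMI} is at most $K\log d<\infty$, so all the additions and cancellations below are legitimate.

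For part (a), set $Z:=\zeta_{A}(S_{[0,T)})=\zeta_{B}(S_{[0,T)})=\zeta_{C}(S_{[0,T)})$ almost surely. Since $H(Z,Z)=H(Z)$ and $H(Z,Z,Z)=H(Z)$, each of the seven entropy terms in the second line of \eqref{MMI} equals $H(Z)=H_{T,\zeta_{A}}$; the signed sum is $3H(Z)-3H(Z)+H(Z)=H(Z)$, giving $MI_{T,\zeta}(A:B:C)=H_{T,\zeta_{A}}$.

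For part (b), I would work from the first line of \eqref{MMI}, namely $MI_{T,\zeta}(A:B:C)=MI_{T,\zeta}(A:C)+MI_{T,\zeta}(B:C)-MI_{T,\zeta}(A\cup B:C)$. Reading the hypothesis as: the pair $\big(\zeta_{A}(S_{[0,T)}),\zeta_{B}(S_{[0,T)})\big)$ is independent of $\zeta_{C}(S_{[0,T)})$, we obtain $\zeta_{A}\perp\zeta_{C}$, $\zeta_{B}\perp\zeta_{C}$, and --- via the bijection of the first paragraph --- $\zeta_{A\cup B}\perp\zeta_{C}$. Since the coarse-grained mutual information of two independent random variables is $0$ (equivalently, independence means $H(X,Y)=H(X)+H(Y)$), all three terms on the right vanish and $MI_{T,\zeta}(A:B:C)=0$. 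Equivalently, substituting the additivity identities $H_{T,\zeta_{A\cup C}}=H_{T,\zeta_{A}}+H_{T,\zeta_{C}}$, $H_{T,\zeta_{B\cup C}}=H_{T,\zeta_{B}}+H_{T,\zeta_{C}}$, and $H_{T,\zeta_{A\cup B\cup C}}=H_{T,\zeta_{A\cup B}}+H_{T,\zeta_{C}}$ directly into the second line of \eqref{MMI} makes every term cancel. The calculation itself is trivial; the one point that genuinely needs care --- and the closest thing to an obstacle --- is the identification of the packed map $\zeta_{A\cup B}$ with the tuple $(\zeta_{A},\zeta_{B})$ at the level of entropy, i.e.\ that concatenating per-population coarse counts into a single symbol neither creates nor destroys information.
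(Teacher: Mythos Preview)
Your argument is correct and follows essentially the same route as the paper: for (a) you collapse every joint entropy to $H(Z)$ and read off the signed sum $3-3+1$, and for (b) you substitute the additivity identities coming from independence into \eqref{MMI} so that everything cancels (the paper phrases the same cancellation as $MI_{T,\zeta}(A:B)-MI_{T,\zeta}(A:B)=0$, using the identity $MI(X:Y:Z)=MI(X:Y)-MI(X:Y\mid Z)$ implicitly). Your preliminary paragraph justifying that $H_{T,\zeta_{A\cup B}}$ really is the joint entropy of $(\zeta_A,\zeta_B)$ via the base-$d$ packing is a nice point the paper leaves implicit.
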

\begin{proof}
If all three random variables are equal, we have 
$$
H_{T,\zeta_{A
    \cup B}}  =  H_{T, \zeta_{A}}
$$
because the histogram of $\zeta_{A\cup B}(S_{[0, T)})$ is supported by the
diagonal set, which equals $\zeta_{A}(S_{[0, T)})$. Similar argument
implies that 
$$
  H_{T,\zeta_{B \cup C}}  = H_{T,\zeta_{A
    \cup C}} = H_{T,\zeta_{A \cup B \cup C}} = H_{T, \zeta_{A}} \,.
$$
This implies (a). 

If $\zeta_{C}(S_{[0, T)})$ is independent of $\zeta_{A}(S_{[0, T)}),
\zeta_{B}(S_{[0, T)})$, we have
$$
  H_{T,\zeta_{B \cup C}} = H_{T,\zeta_{B}} + H_{T,\zeta_{C}}, \quad
  H_{T,\zeta_{A \cup C}} = H_{T,\zeta_{A}} + H_{T,\zeta_{C}}
$$
and
$$
  H_{T,\zeta_{A \cup B \cup C}}  = H_{T,\zeta_{A \cup B}}  +
  H_{T,\zeta_{C}}  \,.
$$
It follows from equation \eqref{MMI} that 
$$
  MI_{T, \zeta}(A:B:C) = MI_{T, \zeta}(A:B) - MI_{T, \zeta}(A:B) = 0 \,.
$$
\end{proof}

Now consider a neuronal network with $K$ local populations. Let two
disjoint sets $\mathcal{I}, \mathcal{O} \subset \{1, \cdots, K\}$ denote the {\it
  input set} and {\it output set} of this network. The {\it degeneracy}
$\mathcal{D}_{T, \zeta}( \mathcal{I}: \mathcal{O})$ is given by the
weighted sum of multivariate mutual informations:
\begin{equation}
\label{deg}
\mathcal{D}_{T, \zeta}( \mathcal{I}: \mathcal{O}) = \sum_{0 \leq k
  \leq | \mathcal{I}|} \frac{1}{2 {|
    \mathcal{I}| \choose k}}MI_{T,
  \zeta}(\mathcal{I}_{k}: \mathcal{I}_{k}^{c}: \mathcal{O}) \,,
\end{equation}
where the summation goes through all possible partitions
of $\mathcal{I}$. Set $\mathcal{I}_{k}$ means a subset of
$\mathcal{I}$ with $k$ local populations. Degeneracy measures how much more information different
components of the input set share with the output set than expected if
all components are independent. Degeneracy is high if many structurally different
components in the input set can perform similar functions on a
designated output set. A neuronal network is said to be {\it degenerate} if
$\mathcal{D}_{T, \zeta}(\mathcal{I}: \mathcal{O}) > 0$ for some choice
of $T, \zeta, \mathcal{I}$, and $\mathcal{O}$. 

The (structural) {\it complexity} $\mathcal{C}_{T, \zeta}( \mathcal{I}:
\mathcal{O})$ is given by the weighted sum of mutual information
between components of the input set. 
\begin{equation}
\label{com}
\mathcal{C}_{T, \zeta}( \mathcal{I}: \mathcal{O}) = \sum_{0 \leq k
  \leq | \mathcal{I}|} \frac{1}{2 {|
    \mathcal{I}| \choose k}}MI_{T,
  \zeta}(\mathcal{I}_{k}: \mathcal{I}_{k}^{c}) \,,
\end{equation}
The complexity measures how much codependency in a network appears
among different components of the input set. Again, a neuronal network
is said to be (structurally) {\it complex} if
$\mathcal{C}_{T, \zeta}(\mathcal{I}: \mathcal{O}) > 0$ for some choice
of $T, \zeta, \mathcal{I}$, and $\mathcal{O}$. 

The degeneracy at two
limit cases can be given by proposition \ref{limitcases} easily. When
the input is independent of the output, we have zero degeneracy. When
a neuronal network is fully synchronized, and $A, B, C$ are three
local populations with $\mathcal{I} = \{A, B\}, \mathcal{O} = C$, then
the degeneracy equals to $H_{\zeta, T}(A)$, which is positive. Our
numerical simulation result will confirm this.

Finally, we have the following lemma regarding the connection between
degeneracy and complexity. 
\begin{lem}
\label{lem51}
For any choice of $T, \zeta, \mathcal{I}$, and $\mathcal{O}$ and any
decomposition $\mathcal{I} = \mathcal{I}_{k} + \mathcal{I}_{k}^{c}$,
we have
$$
  MI_{T, \zeta}( \mathcal{I}_{k} : \mathcal{I}_{k}^{c}: \mathcal{O})
  \leq \min \{ MI_{T, \zeta}(\mathcal{I}_{k} : \mathcal{I}_{k}^{c}),
  MI_{T, \zeta}(\mathcal{I}_{k} : \mathcal{O}),  MI_{T,
    \zeta}(\mathcal{I}_{k}^{c}: \mathcal{O}) \} \,. 
$$
\end{lem}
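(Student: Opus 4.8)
The plan is to reduce everything to the following elementary fact about (coarse‑grained) entropies of discrete random variables: for any two such variables $Y,W$ the mutual information $\mathrm{MI}(Y:W) = H(Y) + H(W) - H(Y,W)$ is nonnegative, and more generally $\mathrm{MI}(Y:W) \le H(Y)$ and $\mathrm{MI}(Y:W) \le H(W)$. In the present setting $Y$ and $W$ are the coarse‑grained random variables $\zeta_{A}(S_{[0,T)})$, $\zeta_{B}(S_{[0,T)})$, $\zeta_{\mathcal{O}}(S_{[0,T)})$ and joint combinations thereof; they all take values in a countable set, so the classical finite/countable‑alphabet inequalities apply verbatim. (Finiteness of the relevant entropies is guaranteed by Theorem \ref{thm1} and the law of large numbers of spike counts, so no convergence issue arises.)

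First I would rewrite the multivariate mutual information in its conditional form. Using the identity in \eqref{MMI} together with the chain‑rule decomposition
$$
  \mathrm{MI}_{T,\zeta}(\mathcal{I}_{k}:\mathcal{I}_{k}^{c}:\mathcal{O})
  = \mathrm{MI}_{T,\zeta}(\mathcal{I}_{k}:\mathcal{I}_{k}^{c}) - \mathrm{MI}_{T,\zeta}(\mathcal{I}_{k}:\mathcal{I}_{k}^{c}\mid \mathcal{O}),
$$
where the conditional mutual information is the usual nonnegative quantity $H_{T,\zeta_{\mathcal{I}_{k}\cup\mathcal{O}}} + H_{T,\zeta_{\mathcal{I}_{k}^{c}\cup\mathcal{O}}} - H_{T,\zeta_{\mathcal{I}_{k}\cup\mathcal{I}_{k}^{c}\cup\mathcal{O}}} - H_{T,\zeta_{\mathcal{O}}}$. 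Since conditional mutual information is $\ge 0$, this immediately gives $\mathrm{MI}_{T,\zeta}(\mathcal{I}_{k}:\mathcal{I}_{k}^{c}:\mathcal{O}) \le \mathrm{MI}_{T,\zeta}(\mathcal{I}_{k}:\mathcal{I}_{k}^{c})$. By symmetry of \eqref{MMI} in its three arguments, the same manipulation with $\mathcal{I}_{k}$ (resp. $\mathcal{I}_{k}^{c}$) singled out as the conditioning variable yields $\mathrm{MI}_{T,\zeta}(\mathcal{I}_{k}:\mathcal{I}_{k}^{c}:\mathcal{O}) \le \mathrm{MI}_{T,\zeta}(\mathcal{I}_{k}^{c}:\mathcal{O})$ and $\le \mathrm{MI}_{T,\zeta}(\mathcal{I}_{k}:\mathcal{O})$. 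Taking the minimum of the three bounds gives the lemma.

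The one genuine point requiring care — and the step I expect to be the main (minor) obstacle — is justifying that the conditional mutual information is nonnegative in this coarse‑grained framework, since the quantities $H_{T,\zeta_{C}}$ are defined directly as entropies of the pushforward laws under $\zeta_{C}$ rather than via a single joint distribution. The fix is to observe that all the coarse‑grained functions appearing are applied to the same spike train $S_{[0,T)}$ drawn from the stationary measure $\pi$; hence $\bigl(\zeta_{\mathcal{I}_{k}}(S_{[0,T)}),\ \zeta_{\mathcal{I}_{k}^{c}}(S_{[0,T)}),\ \zeta_{\mathcal{O}}(S_{[0,T)})\bigr)$ is a well‑defined random triple on a common probability space, and $\zeta_{\mathcal{I}_{k}\cup\mathcal{O}}$ etc.\ are (by the product structure in \eqref{eq5-1}) deterministic functions of the corresponding pairs, so $H_{T,\zeta_{\mathcal{I}_{k}\cup\mathcal{O}}}$ really is the joint entropy of that pair. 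Once this identification is made, nonnegativity of conditional mutual information is the standard fact (subadditivity of entropy / Jensen), and the proof is complete. I would state this common‑probability‑space observation explicitly as the first sentence of the proof and then carry out the two‑line algebra above.
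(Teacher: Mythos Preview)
Your proposal is correct and follows essentially the same approach as the paper: reduce to the identity $MI(X:Y:Z) = MI(X:Y) - MI(X:Y\mid Z)$, invoke nonnegativity of conditional mutual information, and conclude by symmetry. Your added remark that the $\zeta_{C}(S_{[0,T)})$ live on a common probability space and that $H_{T,\zeta_{A\cup B}}$ is genuinely the joint entropy is a welcome clarification the paper leaves implicit, but otherwise the arguments coincide.
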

\begin{proof}
This lemma is a discrete version of Lemma 5.1 of \cite{li2016systematic}. We
include it for the sake of completeness of this paper. It is
sufficient to prove that for any three discrete random variables $X$,
$Y$, and $Z$ with joint probability distribution function $p(x,y,z) =
\mathbb{P}[ X = x, Y = y, Z = z]$,
$$
  MI(X:Y:Z) \leq \min \{ MI(X:Y), MI(X:Z), MI(Y:Z)\} \,.
$$
It follows from the definition of the multivariate mutual information
and some elementary calculations that 
\begin{eqnarray*}
  MI(X:Y:Z) &=& H(X) + H(Y) + H(Z) - H(X,Y) - H(X,Z) - H(Y,Z) + H(X,Y,Z) \\
&=& MI(X:Y) - (H(X,Z) + H(Y,Z) - H(Z) - H(X,Y,Z)\\
&=&MI(X:Y) - MI(X:Y \,|\, Z) \,,
\end{eqnarray*}
where the latter time is the conditional mutual information. The
conditional mutual information is nonnegative, as a direct corollary
of Kullback's inequality \cite{yeung2012first}. Hence 
$$
  MI(X:Y:Z) \leq MI(X:Y) \,.
$$
Inequalities $MI(X:Y:Z) \leq MI(X:Z)$ and $MI(X:Y:Z) \leq MI(Y:Z)$
follow analogously. This completes the proof.
\end{proof}

The following theorem is a straightforward corollary of Lemma
\ref{lem51}. 

\begin{thm}
\label{degcom}
For any choice of $T, \zeta, \mathcal{I}$, and
$\mathcal{O}$, 

$$
  \mathcal{D}_{T, \zeta}( \mathcal{I}: \mathcal{O}) \leq \mathcal{C}_{T, \zeta}( \mathcal{I}: \mathcal{O}) 
  \,.
$$
\end{thm}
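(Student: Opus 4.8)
The plan is to derive Theorem~\ref{degcom} as a termwise consequence of Lemma~\ref{lem51}. Recall from \eqref{deg} and \eqref{com} that both $\mathcal{D}_{T,\zeta}(\mathcal{I}:\mathcal{O})$ and $\mathcal{C}_{T,\zeta}(\mathcal{I}:\mathcal{O})$ are sums over the same index set --- all subsets $\mathcal{I}_k \subset \mathcal{I}$ --- with exactly the same nonnegative weights $\frac{1}{2\binom{|\mathcal{I}|}{k}}$. The only difference is that the $k$-th summand of $\mathcal{D}$ is $MI_{T,\zeta}(\mathcal{I}_k:\mathcal{I}_k^c:\mathcal{O})$ while the $k$-th summand of $\mathcal{C}$ is $MI_{T,\zeta}(\mathcal{I}_k:\mathcal{I}_k^c)$. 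So it suffices to compare these two quantities term by term.

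First I would invoke Lemma~\ref{lem51}, which asserts precisely that for every decomposition $\mathcal{I} = \mathcal{I}_k \cup \mathcal{I}_k^c$,
$$
  MI_{T,\zeta}(\mathcal{I}_k:\mathcal{I}_k^c:\mathcal{O}) \leq \min\{ MI_{T,\zeta}(\mathcal{I}_k:\mathcal{I}_k^c),\ MI_{T,\zeta}(\mathcal{I}_k:\mathcal{O}),\ MI_{T,\zeta}(\mathcal{I}_k^c:\mathcal{O})\}\,.
$$
Dropping the last two terms of the minimum, this in particular gives
$$
  MI_{T,\zeta}(\mathcal{I}_k:\mathcal{I}_k^c:\mathcal{O}) \leq MI_{T,\zeta}(\mathcal{I}_k:\mathcal{I}_k^c)\,.
$$
Then I would multiply each such inequality by the corresponding nonnegative weight $\frac{1}{2\binom{|\mathcal{I}|}{k}}$ and sum over $0 \leq k \leq |\mathcal{I}|$ (and, within each $k$, over the choices of $\mathcal{I}_k$ consistent with the conventions of \eqref{deg} and \eqref{com}). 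Since inequalities are preserved under multiplication by nonnegative scalars and under summation, the left-hand side sums to $\mathcal{D}_{T,\zeta}(\mathcal{I}:\mathcal{O})$ and the right-hand side sums to $\mathcal{C}_{T,\zeta}(\mathcal{I}:\mathcal{O})$, yielding the claimed bound.

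There is essentially no obstacle here; the content of the theorem lies entirely in Lemma~\ref{lem51}, whose proof (via the identity $MI(X:Y:Z) = MI(X:Y) - MI(X:Y\,|\,Z)$ and the nonnegativity of conditional mutual information) is already given. The only point requiring a line of care is that the summation index sets and weights in the definitions of $\mathcal{D}$ and $\mathcal{C}$ are literally identical, so that the termwise comparison assembles correctly; this is immediate from inspecting \eqref{deg} and \eqref{com}. I would therefore present the proof as a short paragraph: quote Lemma~\ref{lem51}, extract the first component of the minimum, and sum with the shared weights.
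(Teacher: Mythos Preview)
Your proposal is correct and matches the paper's approach exactly: the paper states that Theorem~\ref{degcom} is a ``straightforward corollary of Lemma~\ref{lem51}'' and gives no further argument, which is precisely the termwise comparison you describe. Your write-up is in fact more explicit than the paper's.
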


Heuristically, Theorem \ref{degcom} says that a neuronal network with
high degeneracy must has high (structural) complexity as well.

\subsection{degeneracy and complexity in a visual cortex model}
We consider the following three numerical simulations to measure the degeneracy and
complexity in model II and model III. Because the estimation of entropy
is not satisfactory when the simulation is under-sampled, we limit the
cardinality of $| \mathcal{I}|$ to three and only consider the case of
$| \mathcal{O} | = 1$

{\bf Numerical simulation I. } In Model II, we consider the input set
$\mathcal{I} = \{L_{6}, L_{7}, L_{10}\}$ and $\mathcal{O} =
L_{22}$. In other words the input set consists of hypercolumns $(2,2),
(2,3)$, and $(3,2)$ in layer 4 and the output set is local population
$(2,2)$ in layer 6. The synapse delay times are chosen to be $\tau_{E}
= 2$ ms and $\tau_{I} = 4.5$ ms. The time window size is $T = 5$ ms. The
coarse-grained function $\zeta_{C}$ with respect to a generic set $C$
is given as in equation \eqref{eq5-1}, with a partition function $\Theta$ maps $\mathbb{Z}_{+}$ to $\{0, 1, 2, 3, 4,
5\}$ such that $\Theta(n) = i$ for $20 i \leq n < 20 (i+1)$ if $i
= 0, 1, \cdots, 4$ and $\Theta(n) = 5$ if $n \geq 100$.  Figure
\ref{fig11} Left shows the dependence of degeneracy $\mathcal{D}_{\zeta,
  T}( \mathcal{I}, \mathcal{O})$ versus the coupling strength between
layers $\rho_{f} = \rho_{b} = \rho$. We can see that the degeneracy
and complexity increases with $\rho$ in general. When $\rho$ is too
large, the recurrent excitation cannot be tempered by inhibitions and
the network fires at a very high rate that can not be captured by the
partition map $\Theta$. As a result, both $\mathcal{D}$ and
$\mathcal{C}$ drops to very small values.

\medskip

{\bf Numerical simulation II. } Now we take orientation columns into
considerations. In Model III, we let the input set $\mathcal{I}$
consist of three orientation columns in hypercolumn $(2,2)$ of layer
4, with orientation preferences $0$ deg, $45$ deg, and $90$ deg respectively. The
output set is the orientation column in hypercolumn $(2,2)$ of layer
6, with an orientation preference $0$ deg. The synapse delay times are
chosen to be $\tau_{E} = 2$ ms and $\tau_{I} = 4.5$ ms. The time
window size is $T = 5$ ms. The coarse-grained function $\zeta_{C}$ is
similar as in {\bf Numerical simulation I}, except the partition
function takes value $\Theta(n) = i$ for $10 i \leq n < 10 (i+1)$ if $i
= 0, 1, \cdots, 4$ and $\Theta(n) = 5$ if $n \geq 50$, because an
orientation column contains less neurons than a hypercolumn. The
strength of long range connection is given by $C_{long} = 0.5$. Figure
\ref{fig11} Middle shows the dependence of degeneracy $\mathcal{D}_{\zeta,
  T}( \mathcal{I}, \mathcal{O})$ versus the coupling strength between
layers $\rho_{f} = \rho_{b} = \rho$. Again, stronger coupling between
layers leads to larger degeneracy, until the network firing rate is
too high to be captured by the given partition function. 

\medskip

{\bf Numerical simulation III. } The third simulation studies the
effect of long range connections. Still in Model III, we let the input set $\mathcal{I}$
consist of three orientation columns with orientation preference $0$
deg in hypercolumns $(2,2), (2,3)$, and $(3,2)$ of layer 4. The output set is the orientation column in hypercolumn $(2,2)$ of layer
6, with an orientation preference $0$ deg. Parameters like the synapse
delay times, the time window size, the coarse-grained function
$\zeta_{C}$ are identical to those of {\bf Numerical simulation
  II}. The feedforward and feedback strengths are $\rho_{b} = \rho_{f}
= 0.6$. And the strength of long range connections is the main control
parameter. Figure
\ref{fig11} shows the dependence of degeneracy $\mathcal{D}_{\zeta,
  T}( \mathcal{I}, \mathcal{O})$ versus the strength of long range
connections $C_{long}$. We can see that a stronger long range coupling
between orientation columns also leads to a higher degeneracy. 

\begin{figure}[!h]
\centerline{\includegraphics[width = \linewidth]{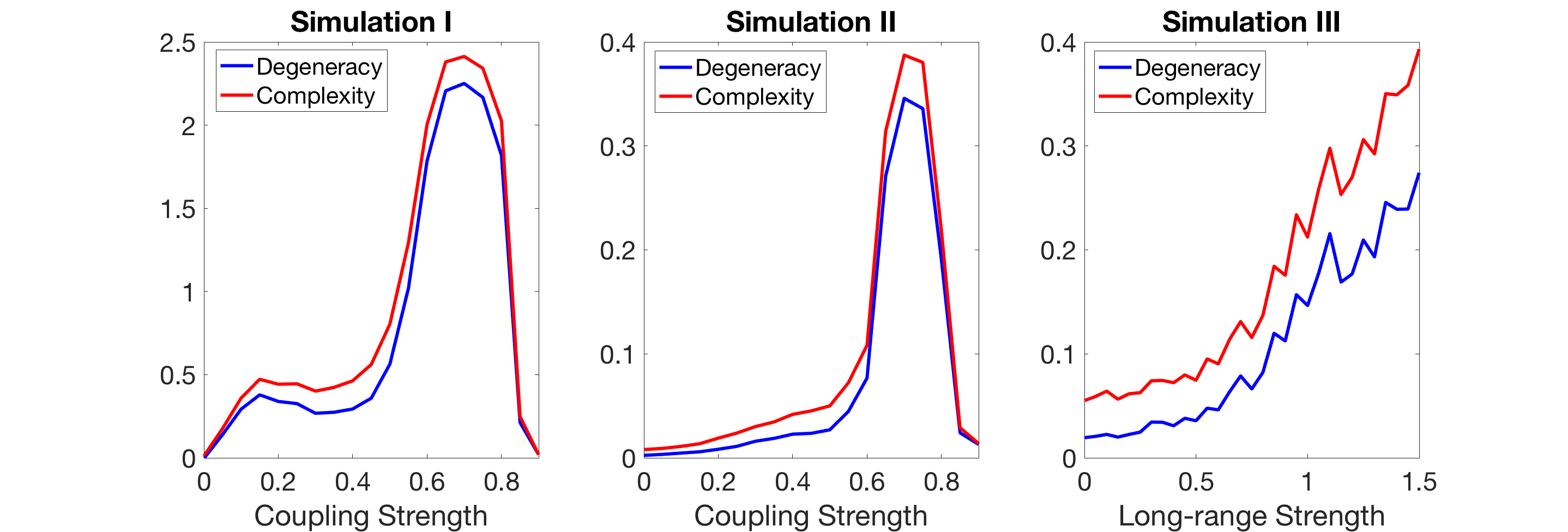}}
\caption{Degeneracy and complexity in model II and III. Left, middle,
  and right panels are from from {\bf Numerical
  simulation I - III} respectively. Length of simulation = 20 with $16$
independent threads. }
\label{fig11}
\end{figure}

\section{Conclusion}
This paper investigates a few information-theoretic measures of a class
of structured neuronal networks. Neurons in the network are of the
integrate-and-fire type. Being consistent with our earlier papers
\cite{li2019well, li2019stochastic}, membrane potentials are set to be discrete to make the
model mathematically and computationally simple. The network consists
of many local populations, each of which has its own external drive
rate. Biologically, one local population could be a hypercolumn in the
cortex or an orientational column in the visual cortex. We provided
several different network models for the purpose of examining
information-theoretic measures. The most complicated model (Model III)
aims to model two layers (layer 4 and layer 6) of the primary visual cortex, each layer has
$4\times 4$ hypercolumns, each hypercolumn has $4$ orientational
columns. 

Then we use the idea of coarse-graining to define the coarse-grained
entropy. The motivation is that the naive definition of neuronal
entropy works poorly for large neuronal networks. One needs
unrealistically large sample to estimate the entropy in a trustable
way. In addition it is known that many neuronal network can generate
Gamma-type rhythms. Hence when calculating the entropy, we chose to ignore the precise order of
neuronal spikes from the same local population if they fall into the
same small time bin. This entropy mainly captures the uncertain in the
oscillations produced by the neuronal network. One can easily compute
the entropy for large scale neuronal networks. The mutual information
can be defined analogously. 

The coarse-grained entropy and the mutual information is examined
through our examples. We found that the coarse-grained entropy mainly
capture the information contained in the rhythm produced by a local
population. Under suitable setting of the partition function, the
coarse-grained entropy reaches maximal value when the partial
synchronization has the most diversity, and decreases when the spiking pattern
is either homogeneous or fully synchronized. Furthermore, in our
two-layer network model, we found that stronger connections between
layers can produce higher mutual information between layers. This is
very heuristic, as stronger coupling between layers makes the firing
patterns from two layers more synchronized. 

In the end, we attempts to quantify two systematic measures, namely
the degeneracy and the complexity, for spiking neuronal network
models. These systematic measures are originally proposed in the study
of systems biology. They can be written as linear combinations of
mutual informations. Therefore, after defining coarse-grained entropy
and mutual information, these systematic measures can be defined
analogously. We found that the inequality proved in our earlier paper
\cite{li2016systematic} still holds, which says that the degeneracy is always
smaller than the complexity, or a system with high degeneracy must be
structurally complex. Finally, we numerically computed degeneracy and
complexity for our two-layer cortex models. We found that at certain
range of parameters, stronger coupling between layers, as well as
stronger long-range connectivities, leads to both larger degeneracy and
larger complexity.

\bibliography{myref}
\bibliographystyle{amsplain}
\end{document}